\newtheorem{theorem}{Theorem}
\newtheorem{corollary}[theorem]{Corollary}
\newtheorem{definition}[theorem]{Definition}
\newtheorem{lemma}[theorem]{Lemma}
\newtheorem{proposition}[theorem]{Proposition}
\newcounter{spslist}
\newenvironment{spslist}{
  \begin{list}
%  {\begin{picture}(1,1)
%     \setlength{\unitlength}{0.5cm}
%     \put(0,0.22){\circle*{0.2}}
%    \end{picture}}
  {\arabic{spslist}.}
  {\usecounter{spslist}
  \setlength{\leftmargin}{\parindent}
  \setlength{\labelsep}{0.6em}
  \setlength{\labelwidth}{1em}
  \setlength{\topsep}{0.8ex}
  \setlength{\rightmargin}{0em}
  \setlength{\itemsep}{0.5ex}
  \setlength{\parsep}{0em}
  \setlength{\itemindent}{0em} }}
  {\end{list}}
\newcommand{\mat}[5]{ \renewcommand{\arraystretch}{#1}
                    \left[\! \begin{array}{cc}
                            #2 & #3 \\
                            #4 & #5 \end{array} \!\right] }
\newcounter{geqncount}
    {\refstepcounter{equation}%
     \setcounter{geqncount}{\value{equation}}%
     \setcounter{equation}{0}%
  }%
    {\setcounter{equation}{\value{geqncount}}}
\newcommand{\V}{\mathcal{V}}
\newcommand{\E}{\mathcal{E}}
\newcommand{\G}{\Gamma}
\newcommand{\Go}{\mathring{\Gamma}}
\newcommand{\Lo}{\mathring{\Lambda}}
\newcommand{\Ao}{\mathring{A}}
\newcommand{\ZZ}{\mathbb{Z}}
\newcommand{\RR}{\mathbb{R}}
\newcommand{\CC}{\mathbb{C}}
\newcommand{\TT}{\mathbb{T}}
\newcommand{\hDD}{h^\text{DD}}
\newcommand{\hDR}{h^\text{DR}}
\newcommand{\hRR}{h^\text{RR}}
\newcommand{\hRD}{h^\text{RD}}
\newcommand{\dhDD}{\dot h^\text{DD}}
\newcommand{\dhDR}{\dot h^\text{DR}}
\newcommand{\dhRR}{\dot h^\text{RR}}
\newcommand{\dhRD}{\dot h^\text{RD}}
\newcommand{\conn}{\mathrm{c}}
\newcommand{\cchi}{\text{\raisebox{1.5pt}{$\chi$}}}
\begin{document}

\bibliographystyle{plain}

\begin{center}
{\bf \Large  Reducible Fermi surface for multi-layer quantum graphs \\ \vspace{0.7ex} including stacked graphene}
\end{center}

\vspace{0.2ex}

\begin{center}
{\scshape \large Lee Fisher* \,and\, Wei Li** \,and\, Stephen P\hspace{-2pt}. Shipman**} \\
\vspace{1ex}
{\itshape Departments of Mathematics\\ University of California at Irvine* and Louisiana State University**}
\end{center}

\vspace{3ex}
\centerline{\parbox{0.9\textwidth}{
{\bf Abstract.}\
We construct two types of multi-layer quantum graphs (Schr\"odinger operators on metric graphs) for which the dispersion function of wave vector and energy is proved to be a polynomial in the dispersion function of the single layer.  This leads to the reducibility of the algebraic Fermi surface, at any energy, into several components.  Each component contributes a set of bands to the spectrum of the graph operator.  When the layers are graphene, AA-, AB-, and ABC-stacking are allowed within the same multi-layer structure.  Conical singularities (Dirac cones) characteristic of single-layer graphene break when multiple layers are coupled, except for special AA-stacking.
One of the tools we introduce is a surgery-type calculus for obtaining the dispersion function for a periodic quantum graph by gluing two graphs together.
}}

\vspace{3ex}
\noindent
\begin{mbox}
{\bf Key words:}  multi-layer graphene, quantum graph, periodic operator, reducible Fermi surface, Floquet transform, Dirac cone, conical singularity
\end{mbox}

\vspace{3ex}

\noindent
\begin{mbox}
{\bf MSC:}  47A75, 47B25, 39A70, 39A14, 47B39, 47B40, 39A12
\end{mbox}
\vspace{3ex}

%%%%%%%%%%%%%%%%%%%%%%%%%%%%%%%%%%%%%%%%%%%%%%%%%
%%%%%%%%%%%%%%%%%%%%%%%%%%%%%%%%%%%%%%%%%%%%%%%%%
\section{Introduction} 

The Fermi surface of a $d$-periodic operator at an energy $\lambda$ is the set of wavevectors  $(k_1,\dots,k_d)$ admissible by the operator at that energy.  
It is the zero-set of the dispersion function $D(k_1,\dots,k_d,\lambda)$, and for periodic graph operators, this is a Laurent polynomial in the Floquet variables $(z_1,\dots,z_d)=(e^{ik_1},\dots,e^{ik_d})$.
When the dispersion function can be factored, for each fixed energy, as a product of two or more polynomials in $(z_1,\dots,z_d)$, each irreducible component contributes a sequence of spectral bands and gaps.  Reducibility is required for the existence of embedded eigenvalues engendered by a local defect \cite{KuchmentVainberg2000,KuchmentVainberg2006}, except for the anomalous situation when an eigenfunction has compact support.

Irreducibility of the Fermi surface for all but finitely many energies is known to occur in special cases:  for the 2D and 3D discrete Laplacian plus a periodic potential~\cite{Battig1988},\cite[Ch.~4]{GiesekerKnorrerTrubowitz1993},\cite[Theorem~2]{Battig1992}; for the continuous Laplacian plus a potential of the form $q(x_1)+q(x_2,x_3)$~\cite[Sec.~2]{BattigKnorrerTrubowitz1991}; and for discrete graph Laplacians with positive weights and more general graph operators, where the underlying graph is planar with two vertices per period~\cite{LiShipman2019a}.

Reducible Fermi surfaces are known to occur for certain multi-layer graph operators.  The simplest are constructed by coupling several identical copies of a discrete graph operator, using Hermitian coupling constants~\cite[\S2]{Shipman2014}; or by coupling two identical layers of a quantum graph by edges between corresponding vertices, where the potential $q_e(x)$ of the Schr\"odinger operator $-d^2/dx^2 + q_e(x)$ on each coupling edge $e$ is symmetric about the center of the edge~\cite[\S3]{Shipman2014}.  When the layers are not coupled symmetrically, a condition for reducibility is that the potentials $q_e(x)$ on all of the coupling edges must lie in the same ``asymmetry class"~\cite[Theorem~4]{Shipman2019}, which occurs when they possess identical asymmetry functions $a_{q_e}(\lambda)$ as defined in that article.  The mechanism at work is a decomposition of the discrete-graph reduction of the quantum graph at each energy, enabled by the common asymmetry function.  

\smallskip

The present work introduces two new classes of self-adjoint periodic multi-layer quantum graph operators with reducible Fermi surface.  For both types, several layers are connected by general graphs, as illustrated in Fig.\,\ref{fig:Layered}.
Reducibility results from a different mechanism from~\cite{Shipman2019}, namely that the dispersion function of the multi-layer graph turns out to be a polynomial function of a ``composite Floquet variable" $\zeta(z_1,\dots,z_d,\lambda)$,
\begin{equation*}
  D(z_1,\dots,z_d,\lambda) \;=\; P\big(\zeta(z_1,\dots,z_d,\lambda),\lambda\big).
\end{equation*}
The function $\zeta$ is a Laurent polynomial in $(z_1,\dots,z_d)$ with coefficients that are meromorphic in~$\lambda$; and $P$ is a polynomial in~$\zeta$ with coefficients that are meromorphic in~$\lambda$.  The components of the Fermi surface at energy~$\lambda$ are therefore of the form $\zeta(z_1,\dots,z_d;\lambda) = \rho(\lambda)$, where $\rho(\lambda)$ is one of the roots of $P(\zeta,\lambda)$ as a polynomial in~$\zeta$.  For multi-layer graphene with the potentials on the edges of the layers being isospectral, this reduces to the form
\begin{equation}
   \mu(\lambda) \;=\; \tilde G(k_1,k_2),
\end{equation}
as discussed in section~\ref{sec:graphene}.  This allows an easy numerical computation of the spectrum.
% $\lambda\in\RR$ is in the spectrum of the periodic multi-layer operator whenever it is in the range of the function~$\tilde G(k_1,k_2)$.

Here is a summary of properties of the two types of multi-layer quantum graphs, introduced in this work, that have reducible Fermi surface, with the number of components equal to the number of layers.  All potentials are electric; we do not treat magnetic potentials in this work.

\medskip
\noindent
{\bfseries Type-1 multilayer graphs.} (Fig.\,\ref{fig:Layered} left; and section\,\ref{sec:type1})

\begin{spslist}
  \item Each layer is separable---it breaks into an infinite array of finite pieces when a vertex and its shifts are removed (Fig.~\ref{fig:Separable}).
  \item The dispersion function $D(z,\lambda)$ of each layer is a polynomial function of a single function $\zeta(z,\lambda)$; for example, all layers may all have the same dispersion function.
  \item Several layers are connected at the vertices of separation by edges or by a more complicated graph.
    \item  An example is AB-stacked graphene, with arbitrary potentials on the three edges of a period of each layer.  Layers may be rotated $180^\circ$.  See section~\ref{subsec:AB}.
\end{spslist}

The proof of Theorem~\ref{thm:type1} on type 1 employs a calculus for obtaining the dispersion function of a periodic graph in terms of the dispersion functions of component graphs, where the component graphs are joined at the periodic shifts of a single vertex (Lemma~\ref{lemma:svj}).  It is a sort of periodic-graph version of surgery techniques for compact graphs, studied in~\cite{BerkolaikoKennedyKurasov2019a}.

\medskip
\noindent
{\bfseries Type-2 multilayer graphs.} (Fig.\,\ref{fig:Layered} right; and section\,\ref{sec:type2})

\begin{spslist}
  \item Each layer has the same underlying graph, which is bipartite with one vertex of each color per period.
  \item  The potentials on corresponding edges in different layers are isospectral in the Dirichlet sense.
  \item Several layers are connected along vertices of the same color by edges or more complicated graphs.
  \item An example is AA-stacked graphene, with arbitrary potentials on the three edges of a period.  Layers may be rotated $180^\circ$.  See section~\ref{subsec:AA}.
\end{spslist}

The proof of Theorem~\ref{thm:type2} on type 2 is linear-algebraic.  It is a generalization of AA-stacked bilayer graphene discussed in~\cite[\S6]{Shipman2019}, where it was noticed that the potentials of the connecting edges, miraculously, did not have to lie in the same asymmetry class in order to obtain reducibility of the Fermi surface.

\medskip
\noindent
{\bfseries Multi-layer graphene models.} (Section\,\ref{sec:graphene})
\smallskip

The quantum-graph model of single-layer graphene satisfies the properties of the individual layers of both type\,1 and type\,2:  (a) Being bipartite, each layer is separable at any vertex; and (b) isospectrality of the potentials on corresponding edges across layers implies that each layer is a polynomial in a common function~$\zeta(z,\lambda)$.
By applying the techniques of both types, one finds that very general stacking of graphene into several layers has reducible Fermi surface.  This includes AA-, AB-, ABC-, and mixed stacking, as illustrated in the figures of Section~\ref{sec:graphene}.  

There is a huge amount of literature on the properties of graphene and its variations---electronic, spectral, and other physical and mathematical properties---and fascinating applications.
The differences between single- and multiple-layer graphene are expounded in~\cite{Castro-NetGuineaPeres2009a,AbergelApalkovBerashevic2010a}, which offers much physical context.  The most important feature of two or more layers is a transition from conical singularities of the dispersion relation (linear band structure at Dirac points) for a single layer to nonconical singularities (quadratic band structure) for multiple layers, accompanied by spectral gaps; see~\cite{PartoensPeeters2006a,McCann2006a,CastroNovoselovMorozov2007a}, for example.  Refs. \cite{KimWalterMoreschini2013a,RozhkovSboychakovRakhmanov2017a,SboychakovRakhmanovRozhkov2015a} present some interesting work on opening gaps by twisting two layers relative to one another.

The present work contributes to the spectral properties of graph models of multi-layer graphene with electric potentials in these ways:  (1) The reducibility of the Fermi surface opens the possibility of constructing local defects in multi-layer graphene that would allow bound states within the radiation continuum (cf.~\cite{KuchmentVainberg2006}); (2) The theorems demonstrate the range of allowed potentials on the layers and the connecting edges in order to obtain a reducible Fermi surface, and (3) The theory offers insight into the effect of multiple layers on conical singularities, or Dirac cones.  Particularly, a condition for conical singularities to persist in AA-stacked graphene is given (section~\ref{subsec:conical}, Proposition~\ref{prop:AAconical}); this result is subsumed by~\cite[Theorem\,2.4]{BerkolaikoComech2018}, which exploits symmetries to obtain Dirac cones; here it arises through different calculations.

%A detailed account of multi-layer quantum-graph graphene alone would be an interesting work in its own right.  The emphasis of this work is on the mechanisms of reducibility of the Fermi surface, which applies to  classes of multi-layer graphs that are much broader than graphene, including arbitrary number of dimensions of periodicity.  

\begin{figure}[h]
\centerline{
\scalebox{0.46}{\includegraphics{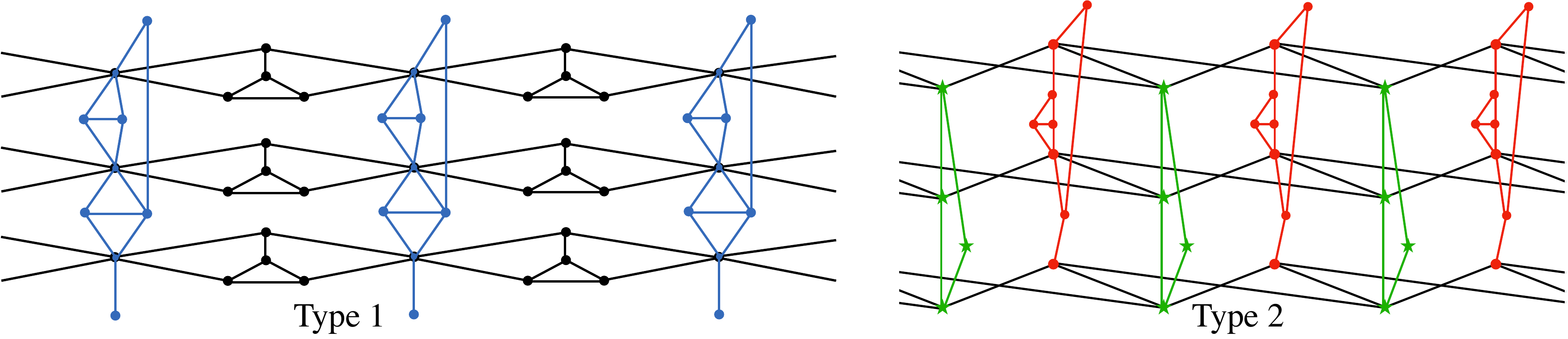}}
}
\caption{\small $1$-periodic examples of multi-layer graphs with reducible Fermi surface. Type~1 (left):  Each layer is a separable periodic graph whose dispersion function is a polynomial in a fixed Laurent polynomial $\zeta(z,\lambda)$.  The layers are connected at corresponding vertices of separation by the periodic translates of a finite (blue) graph.  Type~2 (right):  Each layer has the same underlying bipartite periodic graph with two vertices per period, and the potentials on corresponding (vertically displaced) edges have the same Dirichlet spectrum.  Vertically displaced green vertices are connected by periodic translates of a finite (green) graph; and similarly for the red vertices.}
\label{fig:Layered}
\end{figure}

%%%%%%%%%%%%%%%%%%%%%%%%%%%%%%%%%%%%%%%%%%%%%%%%%
%%%%%%%%%%%%%%%%%%%%%%%%%%%%%%%%%%%%%%%%%%%%%%%%%
\section{Periodic quantum graphs, Fermi surface, reducibility}\label{sec:background}

This section lays down the notation for periodic quantum graphs and the Fermi surface and so on.  Lemma~\ref{lemma:svj} introduces a calculus for joining two periodic graphs by the ``single-vertex join", defined in Definition~\ref{def:svj}, and serves as the basic tool for proving reducibility of multi-layer graphs of type\,1.

%%%%%%%%%%%%%%%%%%%%%%%%%%%%%%%%%%%%%%%%%%%%%%%%%
\subsection{Quantum graphs and notation}\label{subsec:notation}

We give a brief account of periodic quantum graphs that is sufficient for the analysis in this paper.  The notation specific to periodic graphs essentially follows~\cite[\S3.1-3.2]{Shipman2019}; and the standard text~\cite{BerkolaikoKuchment2013} provides a more general exposition of quantum graphs.

The structure of a periodic quantum graph begins with an underlying graph $\G$, with vertex set $\V(\G)$ and edge set $\E(\G)$, with a free shift action of $\ZZ^d$, denoted by $x\mapsto gx$ for $x\in\G$ and $g=(g_1,\dots,g_d)\in\ZZ^d$.  We assume that a fundamental domain of the action (a period of the graph) has finitely many vertices and edges.  $\G$ becomes a metric graph when each edge $e$ is coordinatized by an interval $[0,L_e]$.  Then, $e$ is endowed with a Schr\"odinger operator $-d^2/dx^2+q_e(x)$ ($0<x<L_e$).  A~global operator on $\G$ is determined by coupling these edge operators by a Robin condition at each vertex~$v$,
\begin{equation}\label{robin}
  \sum_{e\in\E(v)} f'(v) \;=\; \alpha_v f(v),
\end{equation}
which is called the Kirchoff or Neumann condition when $\alpha_v=0$.  The sum is over all edges incident to~$v$, and the prime denotes the derivative in the direction away from the vertex into the edge.  If all $\alpha_v$ are real, then (\ref{robin}) determines a self-adjoint operator $A$ in~$L^2(\Gamma)$.  The domain of $A$ consists of all functions in $L^2(\Gamma)$ whose restriction to each edge $e$ is in the Sobolev space $H^2(e)$ and that satisfy the Robin condition at each vertex.  The pair $(\Gamma,A)$ is a quantum graph, and it is periodic provided that the coordinatizations of the edges and the potentials~$q_e$ are invariant under the shift group~$\ZZ^d$.
All vertex conditions that correspond to self-adjoint operators in~$L^2(\Gamma)$ are described in~\cite[Theorem 1.4.4]{BerkolaikoKuchment2013}.

The next two steps are (1) the discrete-graph (or combinatorial) reduction of the equation $(A-\lambda)u=0$ and (2) the Floquet (discrete Fourier) transform; both are described in detail in~\cite[\S3.1-3.2]{Shipman2019}.  Solutions to $(A-\lambda)u=0$ are eigenfunctions of $A$ but cannot lie in $L^2(\Gamma)$; so $A$ is considered as a differential operator on the space of all functions that are in $H^2(e)$ of each edge and satisfy the Robin vertex conditions.  To say that $A$ is a periodic operator is to say that it commutes with the $\ZZ^d$ action.  A simultaneous eigenfunction $u$ of $A$ and the $\ZZ^d$ action is called a Floquet mode of $A$,
\begin{align}
  Au &= \lambda u \\
  u(gx) &= z^g u(x),
\end{align}
in which $z=(z_1,\dots,z_d)=(e^{ik_1},\dots,e^{ik_d})$ is the vector of Floquet multipliers (eigenvalues of the elementary shifts) and $z^g=\prod_{i=1}^d z_i^{g_i}$.
The first equation allows one to use the Dirichlet-to-Neumann map for the ODE $-u''+q_e(x)u-\lambda u=0$ on each edge to write the equation $(A-\lambda)u=0$ solely in terms of the values of $u$ on~$\V(\Gamma)$ by using~(\ref{robin}).  The second equation allows one to restrict the solution to a single fundamental domain.  The result is a homogeneous system of linear equations depending on $\lambda$ and $z$ for the values of $u$ on the finite set of vertices in a fundamental domain,
\begin{equation}
  \hat A(z,\lambda) u = 0.
\end{equation}
Here, $u$ is reused to denote the vector of values of the function $u$ on the vertices and $\hat A(z,\lambda)$ is a square matrix, which we call the {\em spectral matrix} of the quantum graph $(\G,A)$.
One should be a bit careful with using the word ``the", as the spectral matrix does depend on the choice of fundamental domain, but of course the Floquet modes of $(\G,A)$ corresponding to null vectors of $\hat A(z,\lambda)$ are independent of this choice.
We now describe explicitly how this matrix is constructed.

Let $\V_0$ and $\E_0$ denote the vertices and edges of a fixed fundamental domain of a periodic quantum graph $(\Gamma,A)$.  The matrix $\hat A(z,\lambda)$ is indexed by the vertices $\V_0$ minus those that have the Dirichlet condition. 
$\hat A(z,\lambda)$ is constructed as follows.  Given an edge $e\in \E_0$, there are vertices $v,w\in \V_0$ and $g\in\ZZ^d$ such that $e=\{v,gw\}$.
Let $e$ be parameterized by a variable $x\in[0,L]$ such that $x=0$ corresponds to $v$ and $x=L$ corresponds to~$gw$.  Let the transfer matrix for $-d^2u/dx^2 + q_e(x)u = \lambda u$ be
\begin{equation}
  T(\lambda)\;=\;\mat{1.2}{c(\lambda)}{s(\lambda)}{c'(\lambda)}{s'(\lambda)},
\end{equation}
that is, $T(\lambda)[u(0),du/dx(0)]^t = [u(L),du/dx(L)]^t$.
If both $v$ and $w$ have Robin or Neumann conditions and $v\!\not=\!w$, then the following matrix goes into the $2\!\times\!2$ submatrix of $\hat A(z,\lambda)$ indexed by $v$ and $w$:
\begin{equation}\label{DtNz}
  N(g,\lambda) \;=\; \frac{1}{s(\lambda)} \mat{1.2}{-c(\lambda)}{z^g}{z^{-g}}{-s'(\lambda)};
\end{equation}
and if $v\!=\!w$, then $(z^g+z^{-g}-c(\lambda)-s'(\lambda))$ goes into the diagonal entry indexed by $v$.
If $w$ has the Dirichlet condition, then $-c(\lambda)/s(\lambda)$ goes in the diagonal entry for $v$.  A diagonal matrix with entries $-\alpha_v$ is added to account for the Robin parameters for all the vertex conditions.  The matrix $N(0,\lambda)$ represents the Dirichlet-to-Neumann map for $e$ because it takes $(f(v),f(w))$ to $(f'(v),f'(w))$, where the prime denotes the derivative taken at a vertex directed into the edge.  If both $v$ and $w$ have the Dirichlet condition, then $\hat A(z,\lambda)$ attains an extra factor of $s(\lambda)$.

The matrix $\hat A(z,\lambda)$ is a Laurent polynomial in $z$ with coefficients that are matrix-valued meromorphic functions of $\lambda$.  The poles lie at the roots of the functions $s(\lambda)$ for all edges.  This set is denoted by
\begin{equation}
  \sigma_D(\G,A) \;=\; \left\{ \lambda : \exists\,e\in\E(\G),\,s_e(\lambda)=0 \right\}.
\end{equation}
Observe that the function $s_e(\lambda)$ is independent of the orientation of the parameterization of $e$ by $[0,L_e]$.  We call $s_e(\lambda)$ the {\em Dirichlet spectral function} for~$e$, as its roots are the Dirichlet eigenvalues of~$e$.

\smallskip
{\bfseries\slshape Remark.} The poles of $\hat A(z,\lambda)$ can be moved in the sense that, for each $\lambda$, there is an equivalent quantum graph $(\dot\G,\dot A)$ such that $\lambda\not\in\sigma_D(\dot\G,\dot A)$.  This graph is obtained by inserting artificial vertices in the interior of edges of $\G$, as described in~\cite[\S IV]{KuchmentZhao2019a}, and the spectral matrices for $(\dot\G,\dot A)$ and $(\G,A)$ are related by a meromorphic factor.  The following result is proved in section~\ref{sec:dotted}.

\begin{proposition}\label{prop:poles}
  Let $(\Gamma,A)$ be a periodic quantum graph, let $e=(v_1,v_2)$ be an edge of $\Gamma$, and let $\dot\Gamma$ be the graph obtained by placing an additional vertex $v$ in the interior of $e$, thus dividing $e$ into two edges $e_1=(v_1,v)$ and $e_2=(v,v_2)$, with the potentials on $e_1$ and $e_2$ being inherited from $q_e$ on~$e$.  Let $s(\lambda)$, $s_1(\lambda)$, and $s_2(\lambda)$ be the Dirichlet spectral functions for the edges $e$, $e_1$, and $e_2$.  Then
\begin{equation}
  s_1(\lambda)s_2(\lambda)\, D_{(\dot\Gamma,A)}(z,\lambda) \;=\;
  \pm s(\lambda)\, D_{(\Gamma,A)}(z,\lambda).
\end{equation}
\end{proposition}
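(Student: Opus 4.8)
The plan is to compute the spectral matrix $\hat A_{(\dot\Gamma,A)}(z,\lambda)$ explicitly and to eliminate the new, degree-two vertex $v$ by a Schur complement, checking that this elimination reproduces exactly the contribution of the original edge $e$ to $\hat A_{(\Gamma,A)}(z,\lambda)$, up to the scalar factor $-s(\lambda)/(s_1(\lambda)s_2(\lambda))$. First I would fix a fundamental domain for $\Gamma$ and take the one for $\dot\Gamma$ obtained by adjoining the single vertex $v$, which carries the Kirchhoff condition $\alpha_v=0$ (so that $(\dot\Gamma,A)$ really is the same operator). Write $\hat A_{(\Gamma,A)} = M + E$, where $E$ is the contribution of $e$, supported on the rows and columns indexed by $v_1$ and $v_2$ and given by~(\ref{DtNz}) (or its loop/Dirichlet variant), and $M$ absorbs the Robin constants and the contributions of all other edges; here $c,s,c',s'$ denote the entries of the transfer matrix $T(\lambda)$ of $e$. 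Since $v$ is incident only to the two sub-edges $e_1$ and $e_2$, ordering the vertices of $\dot\Gamma$ with $v$ last gives the block form
\[
  \hat A_{(\dot\Gamma,A)}(z,\lambda)\;=\;\begin{pmatrix} B & c_v \\ r_v & d \end{pmatrix},
\]
where $B = M + E'$, with $E'$ the diagonal matrix carrying the entries $-c_1/s_1$ at $v_1$ and $-s_2'/s_2$ at $v_2$ that come from the non-interior ends of $e_1$ and $e_2$; the column $c_v$ and row $r_v$ carry the off-diagonal couplings of $v$ to $v_1$ and $v_2$ read off from~(\ref{DtNz}) applied to $e_1$ and $e_2$; and the scalar $d = -s_1'/s_1 - c_2/s_2$ is the diagonal entry at $v$. (Here $c_i,s_i,c_i',s_i'$ are the entries of the transfer matrix $T_{e_i}$.)

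The computation then rests on two facts. First, the Kirchhoff condition at $v$ is exactly continuity of the function and its derivative there, so transfer matrices multiply, $T_e = T_{e_2}T_{e_1}$; reading off entries gives $s = s_1 c_2 + s_1' s_2$ (whence $d = -s/(s_1 s_2)$), $c = c_1 c_2 + c_1' s_2$, $c' = c_1 c_2' + c_1' s_2'$, and $s' = s_1 c_2' + s_1' s_2'$. Second, each $T_{e_i}$ has unit determinant, $c_i s_i' - c_i' s_i = 1$ (the Wronskian). Now apply the Schur complement, $\det\hat A_{(\dot\Gamma,A)} = d\cdot\det\!\big(B - c_v\,d^{-1} r_v\big)$. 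A short calculation shows that $E' - c_v\,d^{-1} r_v = E$ on $\{v_1,v_2\}$: each of the four entries collapses to the corresponding entry of $E$ after clearing denominators and using the two facts above; for instance $-c_1/s_1 - (1/s_1^2)/d$ simplifies to $-c/s$ using $c_1 s_1' - c_1' s_1 = 1$, and the $(v_1,v_2)$ entry $-z^{g}/(s_1 s_2 d)$ simplifies to $z^{g}/s$. Hence $B - c_v\,d^{-1} r_v = M + E = \hat A_{(\Gamma,A)}$, so $\det\hat A_{(\dot\Gamma,A)} = -\tfrac{s}{s_1 s_2}\det\hat A_{(\Gamma,A)}$, which with $D = \det\hat A$ is the claimed identity, the $\pm$ absorbing the sign ambiguity in the choice of vertex ordering and in the normalization of $D$.

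I expect the main difficulty to be bookkeeping rather than ideas. One must run the same Schur-complement computation through the variant cases: when $e$ is a loop in the quotient graph (so its contribution to $\hat A_{(\Gamma,A)}$ is the single diagonal entry $\tfrac1s(z^{g}+z^{-g}-c-s')$ at $v_1=v_2$, which the elimination of $v$ reproduces because the subdivided edges $e_1,e_2$ are no longer loops), and when $v_1$ or $v_2$ carries a Dirichlet condition (that index is then simply absent from $\hat A$, so the corresponding entries of $E'$ and of $c_v,r_v$ are dropped; in the Dirichlet--Dirichlet case the compensating extra factor of $s$ on the $\Gamma$ side must be tracked). Care is needed to keep the orientations of the edge parameterizations consistent, but in every case the same two inputs---composition of transfer matrices and the Wronskian normalization---drive the simplification, so the argument is uniform. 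Finally, the Schur complement presumes $d\neq0$; since for fixed $z$ both sides of the asserted identity are meromorphic in $\lambda$ and for fixed $\lambda$ both are Laurent polynomials in $z$, the identity, once established on the set where $d\neq0$, holds identically.
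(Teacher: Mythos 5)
Your argument is correct, and it reaches the identity by a route that is the mirror image of the paper's. You take the Schur complement of $\hat A_{(\dot\Gamma,A)}$ with respect to the single new vertex $v$, and verify entrywise---using $T_e=T_{e_2}T_{e_1}$ and $\det T_{e_i}=1$---that eliminating $v$ reconstructs exactly the Dirichlet-to-Neumann contribution (\ref{DtNz}) of the undivided edge, so the Schur complement \emph{is} $\hat A_{(\Gamma,A)}$ and the scalar pivot $d=-s/(s_1s_2)$ supplies the prefactor. The paper instead augments $\hat A_{(\Gamma,A)}$ by a dummy row and column for $v$, Schur-complements away the block $D$ of all \emph{other} vertices, and reduces the claim to the fixed $3\times3$ identity $s_1s_2\det(\dot\Sigma+K)=\pm s\det(\Sigma+K)$, valid for every matrix $K$ vanishing on the row and column of $v$. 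The two computations rest on the same two inputs (composition of transfer matrices and the Wronskian), but yours localizes the work at the degree-two vertex and makes the mechanism transparent---subdividing an edge and eliminating the interior vertex recovers the DtN map of the whole edge---while avoiding the augmentation device; the paper's version isolates a reusable low-dimensional determinant identity that is manifestly independent of the ambient graph. Your treatment of the side cases (loops, Dirichlet endpoints, the extra factor of $s$ in the Dirichlet--Dirichlet convention) and the removal of the hypothesis $d\neq0$ by meromorphic continuation are all sound; the overall sign you obtain differs from the paper's displayed non-loop case, but both are absorbed by the $\pm$ in the statement, which in any event depends on vertex ordering conventions.
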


%%%%%%%%%%%%%%%%%%%%%%%%%%%%%%%%%%%%%%%%%%%%%%%%%
\subsection{The Fermi surface}\label{subsec:Fermi}

The dispersion function for a periodic quantum graph $(\G,A)$ is defined as
\begin{equation}
  D_{(\Gamma,A)}(z,\lambda) \;:=\; \det \hat A(z,\lambda),
\end{equation}
and its zero set is the set of all $z$-$\lambda$ pairs at which $(\G,A)$ admits a Floquet mode.  By fixing an energy $\lambda\in\CC$, one obtains the {\em Floquet surface} of $(\G,A)$:
\begin{equation}
  \Phi_\lambda = \Phi_{(\Gamma,A),\lambda} =
  \big\{ z\in(\CC^*)^d : D_{(\Gamma,A)}(z,\lambda) = 0 \big\}.
\end{equation}
When considered as a set of wavevectors $(k_1,\dots,k_d)\in\CC$ (with $z_j=e^{ik_j}$), it is the {\em Fermi surface} of $(\G,A)$.  We will just use the term ``Fermi surface" for $\Phi_\lambda$.
The spectrum of $(\Gamma,A)$ consists of all energies~$\lambda$ such that the Fermi surface intersects the $d$-torus $\TT^d = \{ z\in\CC^d : |z_1|=\dots=|z_d|=1 \}$,
\begin{equation}
  \sigma_{(\Gamma,A)} = \left\{ \lambda\in\CC : \Phi_{(\Gamma,A),\lambda} \cap \TT^d \not= \emptyset \right\}.
\end{equation}
Both $D(z,\lambda)$ and $\Phi_\lambda$ are independent of the choice of fundamental domain of~$\G$.

Importantly, when $\Gamma$ is disconnected, with each connected component being a compact graph, a fundamental domain can be chosen to be one component $\G_0$, and thus $\hat A(z,\lambda)$ and $D(z,\lambda)$ are independent of~$z$.  All of the matrices~(\ref{DtNz}) have $g\!=\!0$ and reduce to the Dirichlet-to-Neumann maps for the edges.  In this case, the spectral matrix, which can be denoted by $\hat A(\lambda)$ is the spectral matrix of $A$ confined to the finite graph $\G_0$, and the roots of its determinant $D(\lambda)$ are the eigenvalues of this finite quantum graph.

The Fermi surface is an algebraic set in $(\CC^*)^d$, and it is {\em\bfseries reducible} at $\lambda$ whenever $\Phi_\lambda$ is the union of two algebraic sets.  This occurs whenever $D(z,\lambda)$ is factorable into two polynomials, neither of which is a monomial.  One should be aware of the situation when $D(z,\lambda)=D_1(z,\lambda)^m$, with $D_1(z,\lambda)$ being irreducible; $\Phi_\lambda$ is reducible on account of having a component of multiplicity~$m$.

%%%%%%%%%%%%%%%%%%%%%%%%%%%%%%%%%%%%%%%%%%%%%%%%%
\subsection{A calculus for joining two periodic graphs}

The lemma in this section is the building block for the subsequent theorems on reducible Fermi surfaces for multi-layer quantum graph operators.
It can be viewed as a sort of periodic-graph version of the surgery principles for finite quantum graphs~\cite{BerkolaikoKennedyKurasov2019a}.  These describe how the spectrum of a new graph is related to the spectra of old graphs under various modifications and joinings.
It will be convenient to use the following notation for the dispersion function of a periodic quantum graph~$(\Gamma,A)$, which emphasizes the dependence on $\Gamma$,
\begin{equation}
  \left[ \Gamma \right] := D_{(\Gamma,A)}(z,\lambda).
\end{equation}

\begin{definition}[$\Gamma^v$]\label{def:Gammav}
Let $\Gamma$ be a $d$-periodic graph, and let $v$ be a vertex of $\Gamma$ of degree $r$.
Denote by $\Gamma^v$ the periodic graph obtained by replacing $gv$ by $r$ terminal vertices incident to the $r$ edges that are incident to $v$ in~$\Gamma$.
\end{definition}

\noindent
If a Schr\"odinger operator $A$ is defined on $\Gamma$ as a metric graph, define a Schr\"odinger operator $A^v$ on $\G^v$ as follows.

\begin{definition}[($\Gamma^v,A^v)$]\label{def:GammaAv}
  Let $(\Gamma,A)$ be a $d$-periodic quantum graph containing vertex $v\in\V(\Gamma)$.  Denote by $(\Gamma^v,A^v)$ the quantum graph obtained by replacing the vertex condition at each vertex in the orbit $\{gv : g\in\ZZ^d\}$ in $\Gamma$ with the Dirichlet condition.  Thus $\G$ may as well be replaced by~$\G^v$.
\end{definition}

\begin{definition}[Single-vertex join $\Gamma_1(v_1\,v_2)\Gamma_2$]\label{def:svj}
  Let $\Gamma_1$ and $\Gamma_2$ be $d$-periodic quantum graphs with Robin parameter $\alpha_1$ at $v_1\in\V(\Gamma_1)$ and $\alpha_2$ at $v_2\in\V(\Gamma_2)$.  The {\em single-vertex join} of $\Gamma_1$ and $\Gamma_2$ at the pair $(v_1,v_2)$, denoted by $\Gamma_1(v_1\,v_2)\Gamma_2$, is a quantum graph with vertex set $\V(\Gamma_1)\cup\V(\Gamma_2)/\!\equiv$, in which $gv_1\equiv gv_2$ for all $g\in\ZZ^d$ and edge set $\E(\Gamma_1)\cup\E(\Gamma_2)$.  A Robin vertex condition with parameter $\alpha_1+\alpha_2$ is imposed at the joined vertices $gv_1\equiv gv_2$, and all other vertex conditions are inherited from $\Gamma_1$ and~$\Gamma_2$.
If the Robin parameter at the joined vertex $v_1\equiv v_2$ is changed to $\alpha$, the resulting graph is denoted by
$\Gamma_1(v_1\,v_2)_{\!\alpha\,}\Gamma_2$.
\end{definition}

\begin{lemma}\label{lemma:svj}
  Let $\Gamma_1$ and $\Gamma_2$ be $d$-periodic quantum graphs with $v_1\in\V(\Gamma_1)$ and $v_2\in\V(\Gamma_2)$.  Then the dispersion function for $\Gamma_1(v_1,v_2)\Gamma_2$ is
\begin{equation}
  \left[ \Gamma_1(v_1,v_2)\Gamma_2 \right]
    \;=\; \left[ \Gamma_1 \right]\left[ \Gamma_2^{v_2} \right]
    +  \left[ \Gamma_1^{v_1} \right] \left[ \Gamma_2 \right],
\end{equation}
and the dispersion function for $\Gamma_1(v_1\,v_2)_{\!\alpha\,}\Gamma_2$ is
\begin{equation}
  \left[ \Gamma_1(v_1\,v_2)_{\!\alpha\,}\Gamma_2 \right]
    = \left[ \Gamma_1 \right]\left[ \Gamma_2^{v_2} \right]
    +  \left[ \Gamma_1^{v_1} \right] \left[ \Gamma_2 \right]
    + (\alpha-\alpha_1-\alpha_2) \left[ \Gamma_1^{v_1} \right] \left[ \Gamma_2^{v_2} \right].
\end{equation}
\end{lemma}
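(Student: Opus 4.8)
The plan is to compute the spectral matrix of the single-vertex join directly in terms of the spectral matrices of $\Gamma_1$ and $\Gamma_2$, and then expand its determinant along the row and column indexed by the joined vertex. Fix fundamental domains $\V_0(\Gamma_1)$ and $\V_0(\Gamma_2)$ with $v_1\in\V_0(\Gamma_1)$ and $v_2\in\V_0(\Gamma_2)$; a fundamental domain for the join is obtained by identifying $v_1$ with $v_2$, so its vertex set is $(\V_0(\Gamma_1)\setminus\{v_1\})\cup\{v_1{=}v_2\}\cup(\V_0(\Gamma_2)\setminus\{v_2\})$. Because no edge of $\Gamma_1$ meets an edge of $\Gamma_2$, the only interaction between the two blocks in $\hat A_{\Gamma_1(v_1v_2)\Gamma_2}(z,\lambda)$ occurs in the single diagonal entry for the merged vertex, which is the sum of the would-be $v_1$-diagonal entry coming from $\Gamma_1$ and the would-be $v_2$-diagonal entry coming from $\Gamma_2$ — this is exactly why the Robin parameters add (Definition~\ref{def:svj}), since the diagonal contribution from the construction rule includes $-\alpha_v$ together with the $\sum(z^g+z^{-g}-c-s')$ terms from incident edges.

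Next I would make the off-diagonal-vs-diagonal split precise. Write the $v_1$-diagonal entry of $\hat A_{\Gamma_1}$ as $a_1$ and the $v_2$-diagonal entry of $\hat A_{\Gamma_2}$ as $a_2$, so that the merged diagonal entry in the join is $a_1+a_2$. Form the block matrix
\begin{equation*}
  M \;=\; \begin{pmatrix} B_1 & c_1 & 0 \\ r_1 & a_1+a_2 & r_2 \\ 0 & c_2 & B_2 \end{pmatrix},
\end{equation*}
where $B_1$ (resp. $B_2$) is the principal submatrix of $\hat A_{\Gamma_1}$ (resp. $\hat A_{\Gamma_2}$) on the non-$v_1$ (resp. non-$v_2$) vertices, $c_1,r_1$ are the $v_1$-column/row restricted off the diagonal, and similarly $c_2,r_2$. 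Now use multilinearity of the determinant in the middle row: $\det M = \det M_{a_1\text{-row}} + \det M_{a_2\text{-row}}$, where $M_{a_1\text{-row}}$ has middle row $(r_1,\,a_1,\,r_2)$ with the $r_2$ block and $B_2$ still present, and correspondingly $a_2$ in the other term. In $M_{a_1\text{-row}}$, the block $\begin{pmatrix} B_1 & c_1 \\ r_1 & a_1\end{pmatrix}$ is exactly $\hat A_{\Gamma_1}$, while the presence of $B_2$ with a zero in the column below its attachment point forces — after a Laplace/cofactor expansion down the block columns of $\Gamma_2$'s vertices — the factor $\det B_2$, which is precisely $[\Gamma_2^{v_2}]$, since deleting the $v_2$-row and $v_2$-column of $\hat A_{\Gamma_2}$ is (by Definition~\ref{def:GammaAv} and the construction rule, where a Dirichlet neighbor simply drops the index) the spectral matrix of $(\Gamma_2^{v_2},A_2^{v_2})$. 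Hence $\det M_{a_1\text{-row}} = [\Gamma_1][\Gamma_2^{v_2}]$, and symmetrically $\det M_{a_2\text{-row}} = [\Gamma_1^{v_1}][\Gamma_2]$, giving the first identity.

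For the Robin-parameter variant, changing the parameter at the joined vertex from $\alpha_1+\alpha_2$ to $\alpha$ replaces the middle diagonal entry $a_1+a_2$ by $a_1+a_2+(\alpha-\alpha_1-\alpha_2)$; applying multilinearity one more time in the middle row splits off a third term whose middle row is $(0,\,(\alpha-\alpha_1-\alpha_2),\,0)$, and the same block-cofactor argument on both sides yields $(\alpha-\alpha_1-\alpha_2)\det B_1\det B_2 = (\alpha-\alpha_1-\alpha_2)[\Gamma_1^{v_1}][\Gamma_2^{v_2}]$. I would also dispose of the Dirichlet-edge bookkeeping: if $v$ is merged with a vertex at the other end of an edge, or if edges at $v_1$ or $v_2$ have Dirichlet conditions at their far ends, the construction rules for $\hat A$ still produce exactly the stated diagonal/off-diagonal pattern, and one should note that $[\Gamma_1^{v_1}]$ and $[\Gamma_2^{v_2}]$ are taken with the same fundamental domains and edge parameterizations so the bookkeeping factors (the possible extra $s(\lambda)$'s) match on both sides. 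The main obstacle is precisely this combinatorial cofactor step: one must verify carefully that deleting the $v$-indexed row and column from $\hat A_\Gamma$ yields $\hat A_{\Gamma^v}$ up to no spurious factor — in particular handling the case $v_1=w$ (a loop-like identification) and the case where some edge incident to $v$ has its other endpoint also Dirichlet, where the construction rule assigns $-c/s$ to a diagonal rather than an off-diagonal $2\times2$ block. Once that identification is nailed down, the determinant expansion is routine multilinearity.
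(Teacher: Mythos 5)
Your approach is exactly the paper's: form the spectral matrix of the join as a block matrix whose only coupling between the two layers is the single merged diagonal entry, and extract the formula by multilinearity of the determinant in the merged row. The paper orders the vertices so that $v_1$ is last in $\Gamma_1$'s block and $v_2$ is first in $\Gamma_2$'s, writes the same three-block matrix, and simply asserts the resulting determinant identity; your extra care about why the principal submatrix on the non-$v_1$ vertices is the spectral matrix of $(\Gamma_1^{v_1},A_1^{v_1})$ (and the caveat about edges joining $v_1$ to its own orbit, where the double-Dirichlet convention inserts an extra $s(\lambda)$) is a point the paper glosses over.

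However, the multilinear split as you state it does not work. You put the middle row of the first term as $(r_1,\,a_1,\,r_2)$, keeping $r_2$; then
\begin{equation*}
  \det\!\begin{pmatrix} B_1 & c_1 & 0 \\ r_1 & a_1 & r_2 \\ 0 & c_2 & B_2 \end{pmatrix}
\end{equation*}
is \emph{not} $\det\bigl(\begin{smallmatrix} B_1 & c_1 \\ r_1 & a_1\end{smallmatrix}\bigr)\det B_2$ in general, because the nonzero $r_2$ and $c_2$ destroy the block-triangular structure (take $B_1=(1)$, $r_1=c_1=0$, $a_1=0$, $r_2=c_2=B_2=(1)$: the determinant is $-1$, not $0$). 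Likewise the complementary term with middle row $(0,a_2,0)$ evaluates to $a_2\det B_1\det B_2$, not $[\Gamma_1^{v_1}][\Gamma_2]$. The correct split distributes the off-diagonal parts of the merged row to their native layers: write the middle row as $(r_1,\,a_1,\,0)+(0,\,a_2,\,r_2)$, so that the first summand is block lower triangular with top-left block $\hat A_{\Gamma_1}$ and bottom-right block $B_2=\hat A_{\Gamma_2^{v_2}}$, and the second is block upper triangular with blocks $\hat A_{\Gamma_1^{v_1}}$ and $\hat A_{\Gamma_2}$. With that one-line repair (and the analogous row $(0,\,\alpha-\alpha_1-\alpha_2,\,0)$ for the Robin variant, which you do state correctly), your argument goes through and coincides with the paper's proof.
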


\begin{proof}
Let $A_1$ and $A_2$ be the operators associated with the quantum graphs $\Gamma_1$ and $\Gamma_2$, and let $\hat A_1(z,\lambda)$ and $\hat A_2(z,\lambda)$ be the spectral matrices of these operators.  Let $\hat A_1^0(z,\lambda)$ and $\hat A_2^0(z,E)$ be the spectral matrices of the operators associated with $\Gamma_1^{v_1}$ and $\Gamma_2^{v_2}$.  By ordering the vertices of a fundamental domain of $\Gamma_1$ such that $v_1$ is listed last, and ordering the vertices of a fundamental domain of $\Gamma_2$ such that $v_2$ is listed first, one obtains the block decomposition
\begin{equation}
  \hat A_1 = 
\renewcommand{\arraystretch}{1.1}
\left[
\begin{array}{cc}
  \hat A_1^0 & a_1 \\ a_1^* & a_1^0
\end{array}
\right], \qquad
  \hat A_2 = 
\renewcommand{\arraystretch}{1.1}
\left[
\begin{array}{cc}
  a_2^0 & a_2^* \\ a_2 & \hat A_2^0
\end{array}
\right],
\end{equation}
in which $a_1$ and $a_2$ are column vectors and $a_1^0$ and $a_2^0$ are scalars.

The matrix $\hat A$ of the operator associated with $\Gamma_1(v_1\,v_2)\Gamma_2$ is
\begin{equation}\label{hatA}
  \renewcommand{\arraystretch}{1.1}
  \hat A \;=\; 
\left[
\begin{array}{ccc}
  \hat A_1^0 & a_1 & \mathbf{0} \\
  a_1^* & a_1^0 + a_2^0 & a_2^* \\
  \mathbf{0} & a_2 & \hat A_2^0
\end{array}
\right].
\end{equation}
Notice that the entry $a_1^0 + a_2^0$ incorporates the Robin parameter $\alpha_1+\alpha_2$.
The first statement of the theorem is just the following statement about determinants:
\begin{equation*}
  \det(\hat A) \;=\; \det(\hat A_1)\det(\hat A_2^0) + \det(\hat A_1^0)\det(\hat A_2).
\end{equation*}
The matrix Floquet transform for $\Gamma_1(v_1\,v_2)_{\!\alpha\,}\Gamma_2$ is obtained by adding $\alpha-\alpha_1-\alpha_2$ to the term $a_1^0+a_2^0$ in~(\ref{hatA}), and the second statement of the theorem follows.
\end{proof}

%%%%%%%%%%%%%%%%%%%%%%%%%%%%%%%%%%%%%%%%%%%%%%%%%
\subsection{Separable periodic graphs}

The class of multi-layer graphs we call type 1 is built on separable layers.
Each layer has the property that, when severed periodically at a certain vertex, it falls apart into a $d$-dimensional array of identical finite graphs, as illustrated in Fig.\,\ref{fig:Separable}.  This is made precise as follows.

\begin{definition}[separable periodic graph]\label{def:separable}
A $d$-periodic graph $\G$ is {\em separable at $v\in\V(\G)$} if $\Gamma^v$ is the union of the $\ZZ^d$ translates of a finite  graph, or, equivalently, if $\G^v$ has compact connected components.
\end{definition}

\begin{figure}[h]
\centerline{
\scalebox{0.25}{\includegraphics{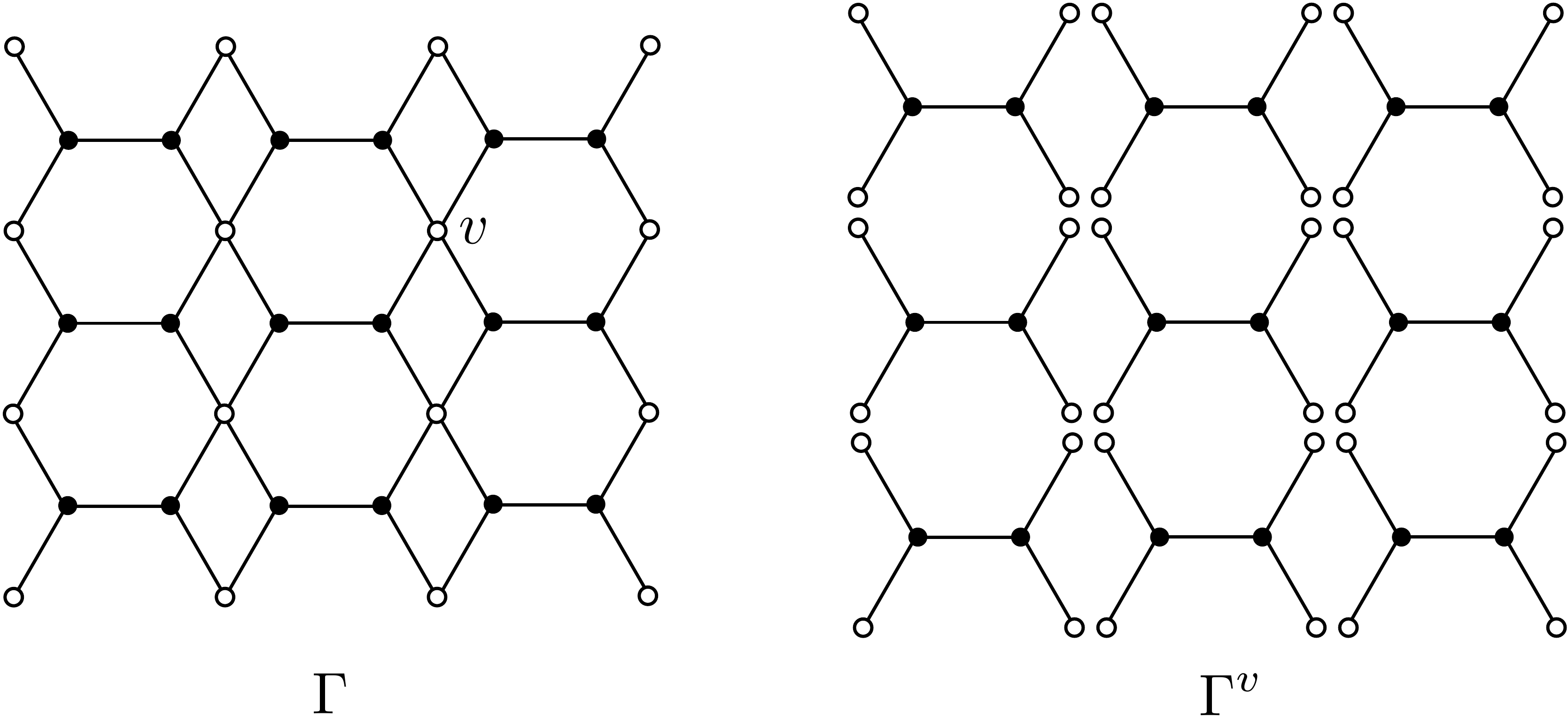}}
}
\caption{\small A $2$-periodic graph that is separable at a vertex $v$; and the corresponding separated graph $\Gamma^v$.}
\label{fig:Separable}
\end{figure}

%%%%%%%%%%%%%%%%%%%%%%%%%%%%%%%%%%%%%%%%%%%%%%%%%
%%%%%%%%%%%%%%%%%%%%%%%%%%%%%%%%%%%%%%%%%%%%%%%%%
\section{Type 1: Multi-layer graphs with separable layers}\label{sec:type1}

This section develops a class of multi-layer graphs whose individual layers are separable and whose Fermi surface is reducible, with several components.  A~1-periodic illustration is in Fig.\,\ref{fig:Layered}(left).  An example is AB-stacked graphene, which is discussed in Section~\ref{subsec:AB}.

%%%%%%%%%%%%%%%%%%%%%%%%%%%%%%%%%%%%%%%%%%%%%%%%%
\subsection{Type-1 multi-layer graphs}

First, we define the {\em layers} (black in Fig.~\ref{fig:Layered}(left)) and the {\em connector graph} (blue in Fig.~\ref{fig:Layered}(left)).
Let $\zeta(z,\lambda)$ be a Laurent polynomial in $z=(z_1,\dots,z_d)$ with coefficients that are meromorphic in $\lambda$.
The $j$-th layer ($j=1,\dots,n$) is a $d$-periodic quantum graph $(\Lambda_j,A_j)$, with a distinguished vertex $v_j$ and such that the dispersion function $d_j(z,\lambda)$ of $(\Lambda_j,A_j)$ and the dispersion function $\mathring{d}_j(z,\lambda)$ of $(\Lambda_j^{v_j},A_j^{v_j})$ are polynomial functions of $\zeta(z,\lambda)$ with coefficients that are meromorphic in~$\lambda$,
\begin{align}
  d_j(z,\lambda) &\;=\; p_j\big(\zeta(z,\lambda),\lambda\big), \\
  \mathring{d}_j(z,\lambda) &\;=\; \mathring{p}_j\big(\zeta(z,\lambda),\lambda\big).
\end{align}
Of course, we have in mind specifically the situation in which each layer $\Lambda_j$ is separable at~$v_j$.  This is because, in this case, $\Lambda_j^{v_j}$ is a disjoint union of compact graphs, and thus its dispersion function is a meromorphic function $f_j(\lambda)$ and therefore a degree-0 polyomial in $\zeta(z,\lambda)$,
\begin{equation}
   \mathring{d}_j(z,\lambda) \;=\; f_j(\lambda).
\end{equation}
It will be useful to allow $\mathring{p}_j$ to be a more general polynomial in applications such as ABC-stacked graphene in Section~\ref{sec:graphene}.  The connector graph is a finite quantum graph $(\Sigma,B)$, together with a list of distinct vertices $w_j\in\V(\Sigma)$ ($j=1,\dots,n$).

From these ingredients, an $n$-layer quantum graph $(\Gamma,A)$ of type 1, as illustrated on the left of Fig.\,\ref{fig:Layered}, is formed as follows.  The vertex $v_j$ is merged, or identified, with $w_j$.  
In like manner, for each $g\in\ZZ^d$, the translated vertices $gv_1,\dots,gv_n$ are coupled by another copy of $\Sigma$, called $g\Sigma$.  The resulting periodic graph~$\G$ is what we call a type-1 multi-layer graph.  Its edge set consists of the edges of each layer $\Lambda_j$ and the edges of each translate $g\Sigma$ of~$\Sigma$.  By denoting the identification of merged vertices by the equivalence relation~$\equiv$, the vertex and edge sets of $\Gamma$~are
\begin{align}
  \V(\Gamma) &\;=\; \left( \bigcup\limits_{j=1}^n \!\V(\Lambda_j) \;\cup\; \bigcup\limits_{g\in\ZZ^d} \!\V(g\Sigma) \right)/\equiv \\
  \E(\Gamma) &\;=\; \bigcup\limits_{j=1}^n \!\E(\Lambda_j) \;\cup\; \bigcup\limits_{g\in\ZZ^d} \!\E(g\Sigma).
\end{align}

The Schr\"odinger operator $A$ on $\Gamma$ has the same differential-operator expression as the operators $A_j$ and~$B$ on the elemental graphs $\Lambda_j$ and~$\Sigma$, and the Robin parameter of an equivalence class of merged vertices (now a single vertex of $\Gamma$) is assigned the sum of the Robin parameters of all the vertices that were merged.

Observe that it is possible to allow several of the vertices $w_j$ to be equal.  In this case, one might as well merge all the layers that are attached to that vertex into a single layer according to the single-vertex join in Definition~\ref{def:svj}, applied several times.  According to the calculus of Lemma~\ref{lemma:svj}, the degree of the polynomial $p_j$ for this new layer is the maximum of the degrees of the polynomials of the joined components. 

\begin{theorem}\label{thm:type1}
  Let $(\Gamma,A)$ be an $n$-layer $d$-periodic quantum graph of type~1.  Its dispersion function $D(z,\lambda)$ is a polynomial in $\zeta(z,\lambda)$ with coefficients that are meromorphic functions of $\lambda$,
\begin{equation}
  D(z,\lambda) \;=\; P(\zeta(z,\lambda),\lambda),
\end{equation}
and the degree of $P$ as a polynomial in $\zeta$ is
\begin{equation}
  \deg P \;=\; \sum_{j=1}^n \deg p_j.
\end{equation}
\end{theorem}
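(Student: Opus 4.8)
The plan is to present $(\Gamma,A)$ as an iterated single-vertex join of the connector with the layers and then to unwind Lemma~\ref{lemma:svj}. Regard the connector $(\Sigma,B)$ as the $d$-periodic quantum graph $\bigsqcup_{g\in\ZZ^d}g\Sigma$; its connected components are compact, so its dispersion function $[\Sigma]$ is a meromorphic function of $\lambda$ alone---the (not identically zero) dispersion function of the compact quantum graph $(\Sigma,B)$---and the same holds for the graph $\Sigma^{w_S}$ obtained by imposing the Dirichlet condition at $w_j$ for every $j$ in a subset $S\subseteq\{1,\dots,n\}$. Since the $w_j$ are distinct, it follows directly from the definitions that
\begin{equation*}
 \Gamma \;=\; \Lambda_n(v_n\,w_n)\bigl(\Lambda_{n-1}(v_{n-1}\,w_{n-1})\bigl(\cdots\bigl(\Lambda_1(v_1\,w_1)\,\Sigma\bigr)\cdots\bigr)\bigr),
\end{equation*}
an iterated single-vertex join in which the $j$-th join is performed at the vertex $w_j$ of the $\Sigma$-factor, a vertex left untouched by the earlier joins (which involve only $w_1,\dots,w_{j-1}$), and in which the accumulated Robin parameter at the merged vertex is $\alpha_{v_j}+\alpha_{w_j}$, matching the unsubscripted join of Definition~\ref{def:svj}.

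I would then apply Lemma~\ref{lemma:svj} at each of the $n$ stages. The one bookkeeping point is that the Dirichlet restriction $(\,\cdot\,)^{w_{j+1}}$ appearing in the lemma, applied to a partial join $\Lambda_j(v_j\,w_j)\cdots\Lambda_1(v_1\,w_1)\Sigma$, touches only the $\Sigma$-factor: because $w_{j+1}$ differs from $w_1,\dots,w_j$ and from every $v_i$, imposing the Dirichlet condition there leaves the join structure intact and merely replaces $\Sigma$ by $\Sigma^{w_{j+1}}$, whose dispersion function is again a function of $\lambda$ alone. Unwinding the recursion---a short induction on $n$ whose base case $n=1$ is exactly Lemma~\ref{lemma:svj} with $\Gamma_1=\Lambda_1$ and $\Gamma_2=\Sigma$---gives
\begin{equation*}
 D(z,\lambda)\;=\;[\Gamma]\;=\;\sum_{S\subseteq\{1,\dots,n\}}\bigl[\Sigma^{w_S}\bigr](\lambda)\;\prod_{j\in S}d_j(z,\lambda)\;\prod_{j\notin S}\mathring{d}_j(z,\lambda).
\end{equation*}

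Now I would substitute $d_j=p_j(\zeta(z,\lambda),\lambda)$ and $\mathring{d}_j=\mathring{p}_j(\zeta(z,\lambda),\lambda)$. Each summand is then a polynomial in $\zeta$ with coefficients meromorphic in $\lambda$, so $D$ is such a polynomial and $D=P(\zeta,\lambda)$. For the degree, the summand indexed by $S=\{1,\dots,n\}$ equals $[\Sigma](\lambda)\,\prod_{j=1}^n p_j(\zeta,\lambda)$, of $\zeta$-degree $\sum_{j=1}^n\deg p_j$, with leading coefficient the product of $[\Sigma](\lambda)$ and the leading coefficients of the $p_j$---a product of nonzero meromorphic functions, hence nonzero. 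Any other summand has $\zeta$-degree $\sum_{j\in S}\deg p_j+\sum_{j\notin S}\deg\mathring{p}_j$, which is strictly smaller as soon as $\deg\mathring{p}_j<\deg p_j$ for every $j$; then nothing cancels the top-order term of the distinguished summand and $\deg P=\sum_{j=1}^n\deg p_j$.

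Two steps call for care. The first is the commutation just described---that restricting the partial join to the Dirichlet condition at $w_{j+1}$ commutes with the joins already performed; this is where the distinctness of the $w_j$ is used, and it should be checked against Definitions~\ref{def:svj} and~\ref{def:GammaAv}. The second, and the main obstacle, is the inequality $\deg\mathring{p}_j<\deg p_j$ underlying the degree count. When $\Lambda_j$ is separable at $v_j$---the case principally in view---this is immediate, since $\Lambda_j^{v_j}$ is a disjoint union of compact graphs, so $\mathring{d}_j=f_j(\lambda)$ is independent of $z$ and $\deg\mathring{p}_j=0$, while $\deg p_j\geq1$ unless $d_j$ itself is independent of $z$ (a degenerate layer, which could simply be absorbed into the connector). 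For layers with $\mathring{p}_j$ a genuine polynomial, such as the ABC-type layers appearing in Section~\ref{sec:graphene}, this inequality would have to be taken as a standing hypothesis on the layers or verified directly in the cases at hand.
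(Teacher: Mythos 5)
Your proposal is correct and takes essentially the same route as the paper: the paper presents $\Gamma$ as the single-vertex join of the $(n{-}1)$-layer graph $\tilde\Gamma$ with $\Lambda_n$, makes exactly your commutation observation (that $\tilde\Gamma^{w_n}$ is the $(n{-}1)$-layer graph built on the connector $\Sigma^{w_n}$), and inducts on $n$ via Lemma~\ref{lemma:svj}, with separability of $\Lambda_n$ at $v_n$ supplying $\deg\mathring{p}_n=0<\deg p_n$ just as in your degree count; your subset-sum formula is simply the unwound form of that induction. One small slip: the coefficient of your top summand $S=\{1,\dots,n\}$ is $[\Sigma^{w_S}]$ with Dirichlet conditions imposed at all the $w_j$, not $[\Sigma]$ --- harmless, since it is still a not-identically-zero meromorphic function of $\lambda$.
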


The proof applies Lemma~\ref{lemma:svj} iteratively as successive layers are connected through the connector graph.

\begin{proof}
When the number of layers is zero, $(\G,A)$ is the union $\cup_{g\in\ZZ^d} g\Sigma$ of disconnected finite components with the operator $B$ acting on each component.  The dispersion function is a meromorphic function of~$\lambda$, independent of $z$, and is thus trivially a polynomial in $\zeta(z,\lambda)$ of degree $0$, with coefficients that are meromorphic in~$\lambda$.
As the induction hypothesis, let the theorem hold with $n$ replaced by $n\!-\!1$, with $n\geq1$.

Let $(\G,A)$ be the type-1 $n$-layer quantum graph supposed in the theorem, with layers
$(\Lambda_j,A_j)$ separable at $v_j$ ($1\leq j\leq n$) and connector graph $(\Sigma,B)$ with distinct joining vertices $\{w_j\}_{j=1}^n$.  If any of the polynomials $p_j$ has degree $0$, then $\Lambda_j$ is a disjoint union of $\ZZ^d$ translates of a finite graph, and this finite graph might as well be joined with the connector graph $\Sigma$.  Therefore, we assume that each $\deg p_j\geq 1$ for all $j:1\leq j\leq n$.

Denote by $(\tilde\G,\tilde A)$ the type-1 $(n\!-\!1)$-layer quantum graph built from the layers $\{(\Lambda_j,A_j)\}_{j=1}^{n-1}$ and the connector objects $(\Sigma,B)$ and $\{w_j\}_{j=1}^{n-1}$.  Note that $(\tilde\G^{w_n},\tilde A^{w_n})$ is the type-1 $(n\!-\!1)$-layer quantum graph built from the layers $\{(\Lambda_j,A_j)\}_{j=1}^{n-1}$ and the connector objects $(\Sigma^{w_n},B^{w_n})$ and $\{w_j\}_{j=1}^{n-1}$.  Denote by $\tilde D(z,\lambda)$ and $\tilde D^0(z,\lambda)$ the dispersion functions of $(\tilde\G,\tilde A)$ and $(\tilde\G^{w_n},\tilde A^{w_n})$.  By the induction hypothesis, they are polynomials in $\zeta(z,\lambda)$ with coefficients that are meromorphic in~$\lambda$, and both are of degree~$\sum_{j=1}^{n-1}\deg p_j$.

The graph $(\G,A)$ is the single-vertex join of $(\tilde\G,\tilde A)$ and $\Lambda_n$,
\begin{equation}
  \G \;=\; \tilde\G(w_n,v_n)\Lambda_n
\end{equation}
and the calculus of Lemma~\ref{lemma:svj} yields
\begin{equation}\label{hi}
  [\G] \;=\; [\tilde\G] [\Lambda_n^{v_n}] + [\tilde\G^{w_n}] [\Lambda_n].
\end{equation}
Since $\Lambda_n$ is separable at $v_n$, $[\Lambda_n^{v_n}]$ is independent of $z$, so the degree of the first term on the right-hand side of (\ref{hi}), as a polynomial in $\zeta$, is $m=\sum_{j=1}^{n-1}\deg p_j$.  The degree of the second term as a polynomial in $\zeta$ is $m+\deg p_n$.  This completes the induction.
\end{proof}

A special case of this theorem occurs when there is only one layer.  The connector graph $\Sigma$ is then viewed as a periodic ``decoration" of $\Lambda_1$.  The result is the following corollary.  Much more is known about decorated periodic graphs, particularly with regard to opening spectral gaps~\cite{SchenkerAizenman2000}.

\begin{corollary}[Decorated graphs]
  Let $(\Gamma,A)$ be a $d$-periodic quantum graph that is separable at vertex~$v$, and let
  $(\Sigma,B)$ be a finite decorator graph with distinguished vertex $w\in\V(\Sigma)$.  Let $\ell(\lambda)$ denote the spectral function of $(\Gamma^v,A^v)$, and let $h(\lambda)$ and $h^0(\lambda)$ denote the spectral functions of $(\Sigma,B)$ and $(\Sigma^w,B^w)$.
  
  Denote by $(\bar\G,\bar A)$ the ``decorated graph" obtained by the single-vertex join of $(\G,A)$ and
 $\Delta=\cup_{g\in\ZZ^d} g\Sigma$ at the vertices $v$ and $w$.
 If the Fermi surface of $(\G,A)$ at energy $\lambda$ is given by
\begin{equation}
  D(z,\lambda) \;=\; 0,
\end{equation}
then the Fermi surface of $(\bar\G,\bar A)$ at $\lambda$ is given by
\begin{equation}
  D(z,\lambda) \;=\; -\frac{\ell(\lambda)h(\lambda)}{h^0(\lambda)}.
\end{equation}
\end{corollary}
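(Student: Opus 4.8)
The plan is to recognize the decorated graph $(\bar\G,\bar A)$ as a single-vertex join in the sense of Definition~\ref{def:svj} and then quote Lemma~\ref{lemma:svj} essentially verbatim. First I would examine $\Delta = \bigcup_{g\in\ZZ^d} g\Sigma$: it is a $d$-periodic quantum graph all of whose connected components are compact, each being a $\ZZ^d$-translate of the finite graph $\Sigma$. As observed in section~\ref{subsec:Fermi}, for such a graph a fundamental domain may be taken to be a single component, so its spectral matrix is independent of $z$ and coincides with the spectral matrix of the finite quantum graph $(\Sigma,B)$; hence $[\Delta] = h(\lambda)$. The identical argument applied to $\Delta^w = \bigcup_{g\in\ZZ^d} g\Sigma^w$ gives $[\Delta^w] = h^0(\lambda)$. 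Likewise, since $(\G,A)$ is separable at $v$, $\G^v$ has compact connected components, so $[\G^v] = \ell(\lambda)$ is a meromorphic function of $\lambda$ alone, while $[\G] = D(z,\lambda)$.

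Next I would note that, by construction, $(\bar\G,\bar A)$ is obtained from $\G$ and $\Delta$ by identifying $gv$ with $gw$ for every $g\in\ZZ^d$ and assigning to each merged vertex the sum of the Robin parameters of $v$ and $w$; this is precisely $\G(v\,w)\Delta$. Applying the first identity of Lemma~\ref{lemma:svj} with $\Gamma_1 = \G$, $v_1 = v$, $\Gamma_2 = \Delta$, $v_2 = w$ then yields
\begin{equation*}
  [\bar\G] \;=\; [\G]\,[\Delta^w] + [\G^v]\,[\Delta]
         \;=\; D(z,\lambda)\,h^0(\lambda) + \ell(\lambda)\,h(\lambda).
\end{equation*}
The Fermi surface of $(\bar\G,\bar A)$ at energy $\lambda$ is the zero set of the left-hand side; since $h^0(\lambda)$ is not identically zero, dividing through gives $D(z,\lambda) = -\ell(\lambda)h(\lambda)/h^0(\lambda)$, as claimed, this being an identity of meromorphic functions obtained first off the poles coming from $\sigma_D(\G,A)$ and $\sigma_D(\Sigma,B)$ and then extended by meromorphy.

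There is essentially no obstacle here; the corollary is a direct specialization of Lemma~\ref{lemma:svj} to the case in which one of the two joined graphs is a disjoint union of compact pieces. The only point that warrants a sentence of care is the bookkeeping of the Robin parameter at the merged vertex: one must verify that the decoration procedure uses the \emph{default} single-vertex join, so that the first (correction-free) formula of Lemma~\ref{lemma:svj} applies. If instead the merged vertex is assigned an arbitrary Robin parameter $\alpha$, the second formula of Lemma~\ref{lemma:svj} must be used, and the right-hand side acquires an extra term $(\alpha - \alpha_v - \alpha_w)\,\ell(\lambda)\,h^0(\lambda)$ before dividing by $h^0(\lambda)$; this variant could be stated as a remark.
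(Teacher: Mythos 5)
Your proposal is correct and follows essentially the same route as the paper: both apply the first identity of Lemma~\ref{lemma:svj} with the identifications $[\G]=D(z,\lambda)$, $[\G^v]=\ell(\lambda)$, $[\Delta]=h(\lambda)$, $[\Delta^w]=h^0(\lambda)$ to obtain $[\bar\G]=D(z,\lambda)h^0(\lambda)+\ell(\lambda)h(\lambda)$ and then set this to zero. Your additional remarks on why $[\Delta]$ and $[\Delta^w]$ are $z$-independent and on the Robin-parameter bookkeeping are consistent with the conventions already laid down in sections~\ref{subsec:Fermi} and Definition~\ref{def:svj}.
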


\begin{proof}
  Actually, this is a bit more than just a corollary to the theorem.  The theorem says that $[\bar\G]$ is a function of $D(\lambda,z)$ that is linear in $D(z,\lambda)$ (take $\zeta(z,\lambda)=D(z,\lambda)$).  But we can find the coefficients from Lemma~\ref{lemma:svj}, which says
  $[\bar\G] \;=\; [\G] [\Delta^w] + [\G^v] [\Delta]$, or
\begin{equation}
  [\bar\G] \;=\; D(z,\lambda) h^0(\lambda) + \ell(\lambda) h(\lambda).
\end{equation}
The Fermi surface of $(\bar\G,\bar A)$ is $[\bar\G]=0$, from which follows the result.
\end{proof}

\bigskip

The polynomial $P(\zeta,\lambda)$ in Theorem~\ref{thm:type1} factors into $m = \sum_{j=1}^n \deg p_j$ linear factors as a function of $\zeta$, and each factor corresponds to a component of the Fermi surface of~$(\G,A)$.

\begin{theorem}\label{thm:reducible1}
  The Fermi surface of a type-1 $n$-layer $d$-periodic quantum graph is reducible into
 $m = \sum_{j=1}^n \deg p_j$ components (with possible multiplicities).  Each component is of the form
\begin{equation}\label{type1fermi}
  \zeta(z,\lambda) \;=\; \mu(\lambda).
\end{equation}
For each $\lambda$, the $m$ (not necessarily distinct) values of $\mu(\lambda)$ are the roots of~$P(\zeta,\lambda)$.
\end{theorem}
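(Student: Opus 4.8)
The plan is to deduce Theorem~\ref{thm:reducible1} directly from Theorem~\ref{thm:type1}, treating it essentially as a statement about the algebraic structure of the zero set of a polynomial. First I would invoke Theorem~\ref{thm:type1} to write $D(z,\lambda) = P(\zeta(z,\lambda),\lambda)$, where $P(\zeta,\lambda)$ is a polynomial in $\zeta$ of degree $m = \sum_{j=1}^n \deg p_j$ with coefficients meromorphic in $\lambda$. For each fixed $\lambda$ (away from the discrete set of poles of these coefficients, which matters only for the bookkeeping and can be sidestepped by the remark preceding Proposition~\ref{prop:poles} or by clearing denominators), $P(\cdot,\lambda)$ is an honest polynomial of degree $m$ over $\CC$, hence factors as $P(\zeta,\lambda) = c(\lambda)\prod_{i=1}^m (\zeta - \mu_i(\lambda))$, where the $\mu_i(\lambda)$ are its $m$ roots counted with multiplicity and $c(\lambda)$ is the leading coefficient.

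Next I would substitute $\zeta = \zeta(z,\lambda)$ back in to obtain the factorization of the dispersion function at energy $\lambda$:
\begin{equation*}
  D(z,\lambda) \;=\; c(\lambda)\prod_{i=1}^m \big(\zeta(z,\lambda) - \mu_i(\lambda)\big).
\end{equation*}
Since $\zeta(z,\lambda)$ is a Laurent polynomial in $z$ and each $\mu_i(\lambda)$ is a constant (in $z$), each factor $\zeta(z,\lambda) - \mu_i(\lambda)$ is again a Laurent polynomial in $z$; provided it is not a monomial, it defines an algebraic hypersurface $\{\zeta(z,\lambda)=\mu_i(\lambda)\}$ in $(\CC^*)^d$, which is precisely a component of the Fermi surface of the form~(\ref{type1fermi}). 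The union of these $m$ zero sets is exactly $\Phi_\lambda$, so the Fermi surface is reducible into $m$ components, with a component of multiplicity $k$ whenever a root $\mu_i(\lambda)$ is repeated $k$ times — matching the notion of reducibility (including the $D = D_1^k$ phenomenon) discussed in section~\ref{subsec:Fermi}. The statement that the $m$ values of $\mu(\lambda)$ are the roots of $P(\zeta,\lambda)$ is then immediate from the construction.

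The main obstacle is the degenerate case in which some factor $\zeta(z,\lambda) - \mu_i(\lambda)$ happens to be a monomial in $z$ (equivalently, $\zeta(z,\lambda)$ is itself a monomial, or $\zeta(z,\lambda) = \mu_i(\lambda) + (\text{monomial})$), for then that ``component'' is empty in $(\CC^*)^d$ and does not genuinely contribute to reducibility. I would handle this by noting that in the applications of interest — separable layers, and in particular multi-layer graphene — $\zeta(z,\lambda)$ is a genuinely nonmonomial Laurent polynomial (for graphene it is essentially $z_1 + z_2 + z_1 z_2$ up to $\lambda$-dependent normalization, as will be seen in section~\ref{sec:graphene}), so each factor is nonmonomial and the count of $m$ components is sharp; the general statement should therefore be read with the standing assumption that $\zeta$ is not a monomial, or else phrased so that the ``components'' are the level sets $\{\zeta = \mu_i(\lambda)\}$ regardless. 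A secondary, purely cosmetic issue is the meromorphic $\lambda$-dependence of the coefficients of $P$: for the finitely many $\lambda$ at which a coefficient has a pole one passes to the equivalent pole-free quantum graph $(\dot\Gamma,\dot A)$ of the remark before Proposition~\ref{prop:poles}, whose dispersion function differs from $D(z,\lambda)$ by a nonzero meromorphic factor and hence has the same Fermi surface, so the factorization and the component count are unaffected.
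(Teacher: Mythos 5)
Your proposal is correct and follows essentially the same route as the paper, which disposes of this theorem in the single sentence preceding its statement: factor $P(\cdot,\lambda)$ into $m$ linear factors in $\zeta$ and read off each factor as a component $\zeta(z,\lambda)=\mu(\lambda)$. Your extra care about the monomial degeneracy and the poles of the coefficients is a reasonable elaboration of points the paper leaves implicit, but it does not change the argument.
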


%%%%%%%%%%%%%%%%%%%%%%%%%%%%%%%%%%%%%%%%%%%%%%%%%
\subsection{Coupling several type-1 multi-layer graphs}\label{subsec:several}

Multi-layer graphs themselves can be used as the layers of more complex multi-layer graphs.  This ideas will be used for ABC-stacked graphene in section~\ref{subsec:ABC}.  Observe that single-layer graphene is separable at each of the two vertices of a fundamental domain.

Let $(\Gamma_j,A_j)$ be $d$-periodic type-1 multi-layer graphs based on separable layers described in the previous section, with a common composite Floquet variable~$\zeta(z,\lambda)$.  Suppose that each $(\Gamma_j,A_j)$ has a distinguished layer $(\Lambda_j,B_j)$ such that $\Lambda_j$ is separable at a vertex $v_j$ other than the one used in constructing~$(\Gamma_j,A_j)$.  The dispersion function of $(\Lambda_j^{v_j},B_j^{v_j})$ is therefore independent of $z$ and can be used as a layer in a type-1 multi-layer graph.  By replacing the layer $(\Lambda_j,B_j)$ in the construction of $(\Gamma_j,A_j)$ by the layer $(\Lambda_j^{v_j},B_j^{v_j})$, one obtains $(\Gamma_j^{v_j},A_j^{v_j})$.  The point here is that, by Theorem~\ref{thm:type1}, both $(\Gamma_j,A_j)$ and $(\Gamma_j^{v_j},A_j^{v_j})$ have dispersion function that is a polynomial in~$\zeta(z,\lambda)$.  The theorem is then applied again with $(\Gamma_j,A_j)$ as the layers, which are coupled by an arbitrary finite connector graph~$(\Sigma,C)$ by joining the vertices $v_j$ with given vertices $w_j$ of~$\Sigma$.

%%%%%%%%%%%%%%%%%%%%%%%%%%%%%%%%%%%%%%%%%%%%%%%%%
%%%%%%%%%%%%%%%%%%%%%%%%%%%%%%%%%%%%%%%%%%%%%%%%%
\section{Type 2: Multi-layer graphs with bipartite layers}\label{sec:type2}

This section generalizes the construction in~\cite[\S6]{Shipman2019} from bi-layer to $n$-layer quantum graphs and from single-edge coupling to coupling by general graphs, as illustrated on the right in Fig.\,\ref{fig:Layered}.  Each layer has the same underlying graph, which is bipartite with exactly one ``red" and one ``green" vertex in a fundamental domain.
The layers are connected by one graph connecting the $n$ red vertices in a fundamental domain and another graph connecting the $n$ green vertices in a fundamental domain.  A topical example is AA-stacked graphene, which is discussed in Section~\ref{subsec:AA}.

%%%%%%%%%%%%%%%%%%%%%%%%%%%%%%%%%%%%%%%%%%%%%%%%%
\subsection{Coupling by arbitrary finite graphs}

Given that the quantum graph $(\Lambda,\Ao)$, for a given layer, has underlying graph $\Lambda$ which is bipartite with one red and one green vertex per period, the Floquet transform of its discrete reduction is a $2\!\times\!2$ matrix
\begin{equation}
  \hat {\Ao}(z,\lambda) = \mat{1.2}{b_1(\lambda)}{w(z,\lambda)}{w(z^{-1},\lambda)}{b_2(\lambda)},
\end{equation}
in which $b_i(\lambda)$ are meromorphic functions of $\lambda$ and $w(z,\lambda)$ is a Laurent polynomial in $z=(z_1,\dots,z_d)$ with coefficients that are meromorphic in $\lambda$.  (See section~\ref{sec:graphenesingle} for the case of graphene.) Specifically, $w(z,\lambda)$ is a sum over some finite subset $Z\subset\ZZ^d$,
\begin{equation}\label{wzlambda}
  w(z,\lambda) \;=\; \sum_{\ell\in Z} \frac{z^\ell}{s_\ell(\lambda)}\,,
\end{equation}
in which $s_\ell(\lambda)$ is the $s$-function for the potential $q(x)$ on the edge connecting a green vertex in a given fundamental domain with a red vertex in the domain shifted by $\ell\in\ZZ^d$, and $z^\ell = z_1^{\ell_1}\cdots z_n^{\ell_n}$.

The requirement for the multi-layer graphs in the theorem below is that $w(z,\lambda)$ be identical over all the layers; but the functions $b_1(\lambda)$ and $b_2(\lambda)$ are allowed to vary from layer to layer.  This means that each layer must have the same underlying graph $\Lambda$, and that for any given edge of $\Lambda$, the potential at each layer must have the same $s(\lambda)$-function, or, equivalently, the potentials must have the same Dirichlet spectrum.  Indeed, the Dirichlet spectrum of $-d^2/dx^2 + q(x)$ on an interval and the function $s(\lambda)$ determine each other; 
see~\cite[Ch.\,2 Theorem~5]{PoschelTrubowitz1987}, for example.

Several such graphs $(\Lambda,A_k)$, $k=1,\dots,n$, with the same $w(z,\lambda)$, are coupled to form an $n$-layer quantum graph $(\Gamma,A)$ as depicted on the right in Fig.\,\ref{fig:Layered}.  Start with the disjoint union of the $n$ graphs $(\Lambda,A_k)$.  Then replace the $n$ red vertices in a fundamental domain with a finite quantum graph $(\Sigma_1,B_1)$ whose vertex set includes those $n$ red vertices plus (possibly) additional ones.  Another finite graph $(\Sigma_2,B_2)$ connects the green vertices together.  These two coupling graphs are repeated periodically.  The simplest case is when two successive single layers $(\Lambda,A_k)$ and $(\Lambda,A_{k+1})$ are connected with a single edge between corresponding red vertices and a single edge between corresponding green vertices.  The spectral matrix for $(\Gamma,A)$~is
\begin{equation}\label{Ahat2}
\begin{split}
   \hat A(z,\lambda)
   &\;=\; \mat{1.3}{\mathbf{b}_1(\lambda)}{w(z,\lambda) Q}{w(z^{-1},\lambda) Q^T}{\mathbf{b}_2(\lambda)}
      + \mat{1.3}{B_1(\lambda)}{0}{0}{B_2(\lambda)} \\
   &\;=\; \mat{1.3}{\tilde B_1(\lambda)}{w(z,\lambda) Q}{w(z^{-1},\lambda) Q^T}{\tilde B_2(\lambda)},
\end{split}
\end{equation}
in which $Q$ is the $m_1\times m_2$ matrix with the $n\times n$ identity matrix in its upper left, all other entries being zero; $\mathbf{b}_1(\lambda)$ ({\itshape resp.} $\mathbf{b}_2(\lambda)$) is a square diagonal matrix of size $n\!+\!m_1$ ({\itshape resp.} $n\!+\!m_2$),
\begin{equation}
  \mathbf{b}_1(\lambda) \;=\; \underset{j=1\dots n}{\mathrm{diag}} b_1^j(\lambda) \oplus 0_{m_1}
  \;=\; \renewcommand{\arraystretch}{1}
\left[
\begin{array}{cccccc}
  b_1^1(\lambda) & & & & & \\
  \vspace{-4ex} \\
  & \hspace{-8pt}\ddots\hspace{-8pt} & & & & \\
  & & b_1^n(\lambda) & & & \\
  & & & \hspace{-7pt}0 & & \\
  \vspace{-4ex} \\
  & & & & \hspace{-7pt}\ddots\hspace{-5pt} & \\
  & & & & & 0
\end{array}
\right],
\end{equation}
$B_1(\lambda)$ is the $m_1\times m_1$ spectral matrix of the coupling graph for the red vertices and the $m_2\times m_2$ matrix $B_2(\lambda)$ is for the green vertices.  $B_1(\lambda)$, for example, has the 

The dispersion function of $(\Gamma,A)$ is
\begin{align}
  D(z,\lambda) \,=\,  \det \hat A(z,\lambda) &\;=\;
      \det\big(\tilde B_1(\lambda)\big) \det\left( \tilde B_2(\lambda) - w(z,\lambda)w(z^{-1},\lambda)\, Q^T \tilde B_1(\lambda)^{-1} Q \right) \label{detAhat}\\
  &\;=\; P\big(w(z,\lambda)w(z^{-1},\lambda),\lambda\big), \label{Pzetalambda}
\end{align}
in which $P(\cdot,\lambda)$ is a polynomial of degree $n$ with coefficients that are meromorphic functions of~$\lambda$.
For a single layer, this polynomial is just a linear function of the composite Floquet variable
\begin{equation}\label{zetaww}
  \zeta(z,\lambda) := w(z,\lambda)w(z^{-1},\lambda).
\end{equation}
We have proved the following theorem.

\begin{theorem}[bipartite layers]\label{thm:type2}
  Let $(\Gamma,A)$ be a multi-layer periodic quantum graph obtained by coupling $n$~quantum graphs $(\Lambda,A_k)$, $k=1,\dots,n$, in which the underlying graph $\Lambda$ is bipartite with exactly one vertex of each ``color" in a fundamental domain and the potentials defining the operators $A_k$ have the same Dirichlet spectrum on corresponding edges and the coupling is effectuated by finite graphs of each color, as described for a type-2 multi-layer graph.  The Robin parameters may be different across layers.
  
For each energy $\lambda$, the Fermi surface of $(\Gamma,A)$ has $n$ components (which may have multiplicity greater than~$1$).  The components are of the form
\begin{equation}\label{zetaAA}
  \zeta(z,\lambda) \;=\; \rho(\lambda),
\end{equation}
in which $\rho(\lambda)$ is a root of the polynomial~(\ref{detAhat}).  The multiplicity of a component is equal to the multiplicity of the corresponding root.
\end{theorem}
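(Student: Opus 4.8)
The plan is to read the statement off the block form of the spectral matrix $\hat A(z,\lambda)$ in~(\ref{Ahat2}), whose entire $z$-dependence sits in the two scalars $w(z,\lambda)$ and $w(z^{-1},\lambda)$ multiplying the fixed matrices $Q$ and $Q^T$; this is exactly the structural payoff of requiring that all $n$ layers share the same off-diagonal Laurent polynomial $w$ (equivalently, that corresponding edges carry potentials with the same Dirichlet spectrum, cf.~\cite[Ch.\,2 Theorem~5]{PoschelTrubowitz1987}). So the whole proof is essentially the computation~(\ref{detAhat})--(\ref{Pzetalambda}) together with a factorization argument.

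First I would justify the determinant identity~(\ref{detAhat})--(\ref{Pzetalambda}). For $\lambda\notin\sigma_D(\Gamma,A)$ and in the open set where $\det\tilde B_1(\lambda)\neq0$, a Schur-complement (block $LU$) reduction eliminating the $\tilde B_1(\lambda)$ block gives $D(z,\lambda)=\det\tilde B_1(\lambda)\,\det\big(\tilde B_2(\lambda)-\zeta(z,\lambda)\,Q^T\tilde B_1(\lambda)^{-1}Q\big)$ with $\zeta=w(z,\lambda)w(z^{-1},\lambda)$ as in~(\ref{zetaww}). Since neither $\tilde B_2(\lambda)$ nor $Q^T\tilde B_1(\lambda)^{-1}Q$ depends on $z$, the right-hand side is a polynomial in $\zeta$ with coefficients meromorphic in $\lambda$; this is $P(\zeta,\lambda)$. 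For the degree, note that $Q$ has rank $n$ (its only nonzero columns are $n$ distinct coordinate vectors), so $\operatorname{rank}\big(Q^T\tilde B_1^{-1}Q\big)\le n$; expanding the determinant multilinearly in columns, any term of $\zeta$-degree exceeding $n$ involves more than $n$ columns drawn from a rank-$\le n$ matrix, hence linearly dependent columns, and vanishes, so $\deg_\zeta P\le n$. That $\deg_\zeta P=n$ off a discrete set of $\lambda$ follows because the coefficient of $\zeta^n$ is, up to sign and the nonvanishing factor $\det\tilde B_1$, the determinant of the upper-left $n\times n$ block of $\tilde B_1^{-1}$, a meromorphic function not identically zero. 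An alternative that also covers the locus $\det\tilde B_1=0$ is a generalized Laplace expansion of $\det\hat A$ directly along the first block of rows: a nonzero term picks $k$ columns from the block $w(z,\lambda)Q$ (forcing $k\le n$) and exactly $k$ matching columns from $w(z^{-1},\lambda)Q^T$, so it carries the factor $w(z,\lambda)^k\,w(z^{-1},\lambda)^k=\zeta^k$ with coefficient polynomial in the entries of $\tilde B_1$ and $\tilde B_2$. Finally, the poles on $\sigma_D$ are harmless: by the remark preceding Proposition~\ref{prop:poles} one may replace $(\Gamma,A)$ by an equivalent quantum graph with no Dirichlet pole at the energy in question without changing the Fermi surface, so $D(z,\lambda)=P(\zeta(z,\lambda),\lambda)$ holds as a meromorphic identity in $\lambda$.

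Next I would factor and conclude. For fixed $\lambda$ outside the exceptional discrete set, $P(\zeta,\lambda)=c(\lambda)\prod_{i=1}^{n}\big(\zeta-\rho_i(\lambda)\big)$, where $\rho_1(\lambda),\dots,\rho_n(\lambda)$ are the roots of the polynomial in~(\ref{detAhat}), counted with multiplicity, and $c(\lambda)\neq0$ is the leading coefficient. Substituting $\zeta=\zeta(z,\lambda)$ gives $D(z,\lambda)=c(\lambda)\prod_{i=1}^{n}\big(\zeta(z,\lambda)-\rho_i(\lambda)\big)$, hence $\Phi_\lambda=\{z\in(\CC^*)^d:D(z,\lambda)=0\}=\bigcup_{i=1}^{n}\{z:\zeta(z,\lambda)=\rho_i(\lambda)\}$, which is the asserted decomposition~(\ref{zetaAA}) into $n$ pieces. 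If $\rho$ occurs with multiplicity $\mu$ among the $\rho_i$, then $\big(\zeta(z,\lambda)-\rho(\lambda)\big)^{\mu}$ divides $D(z,\lambda)$, so the corresponding component carries multiplicity $\mu$ in the sense of the reducibility discussion in Section~\ref{subsec:Fermi}.

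This argument is routine determinant bookkeeping; the one place that needs a word of care is the degree assertion. The polynomial $P$ has degree exactly $n$, and thus $n$ finite roots, only for $\lambda$ outside a discrete set (off $\sigma_D$ and off the zero set of the meromorphic leading coefficient); at the remaining energies one has merely $\deg_\zeta P\le n$ and correspondingly at most $n$ components, the missing pieces having run off to $\zeta=\infty$. So the cleanest phrasing of the conclusion is ``at most $n$ components, and exactly $n$ for all but a discrete set of energies.'' I would also record that each piece $\{\zeta(z,\lambda)=\rho_i(\lambda)\}$ need not itself be irreducible in $(\CC^*)^d$, so the genuine irreducible decomposition of $\Phi_\lambda$ may be finer than this one; the theorem is a statement at the level of the $n$ pieces, which is precisely what is needed for enumerating the spectral bands and for engineering local-defect eigenvalues embedded in the continuum.
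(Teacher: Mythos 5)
Your proposal is correct and follows essentially the same route as the paper, which proves the theorem by exactly the Schur-complement identity (\ref{detAhat})--(\ref{Pzetalambda}) and the observation that all $z$-dependence enters through $\zeta(z,\lambda)=w(z,\lambda)w(z^{-1},\lambda)$, so that $D(z,\lambda)=P(\zeta,\lambda)$ with $\deg_\zeta P=n$ and the components come from factoring $P$ into linear factors. Your additional bookkeeping---the rank-$n$ argument for the degree bound, the Laplace-expansion alternative covering $\det\tilde B_1(\lambda)=0$, and the caveat that the count of $n$ components holds only off a discrete set of energies where the leading coefficient vanishes---is not in the paper but is a welcome tightening rather than a departure.
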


%%%%%%%%%%%%%%%%%%%%%%%%%%%%%%%%%%%%%%%%%%%%%%%%%
\subsection{Generalization to decorated edges}

Type-2 multi-layer graphs can be generalized by replacing the edges of the bipartite graph with finite graphs that have two distinguished terminal edges, which we call ``decorated edges".  This is illustrated in Fig.\,\ref{fig:GrapheneDecorated} for the graphene structure.  A decorated edge admits a Dirichlet-to-Neumann map that straightforwardly generalizes that of an edge.  The DtN map uses the generalized $c(x,\lambda)$ and $s(x,\lambda)$ functions, whose value and derivative are prescribed at one terminal end and then computed at the other terminal end to form the transfer and DtN matrices.  When forming the spectral matrix $\hat A(z,\lambda)$, this DtN map is used, as described around equation~(\ref{DtNz}).

Incidentally, edges in general, for any of the quantum graphs we consider, may as well be decorated edges.  This can be thought of loosely as allowing a broader class of potentials on the edges.

\begin{figure}[h]
\centerline{
\scalebox{0.45}{\includegraphics{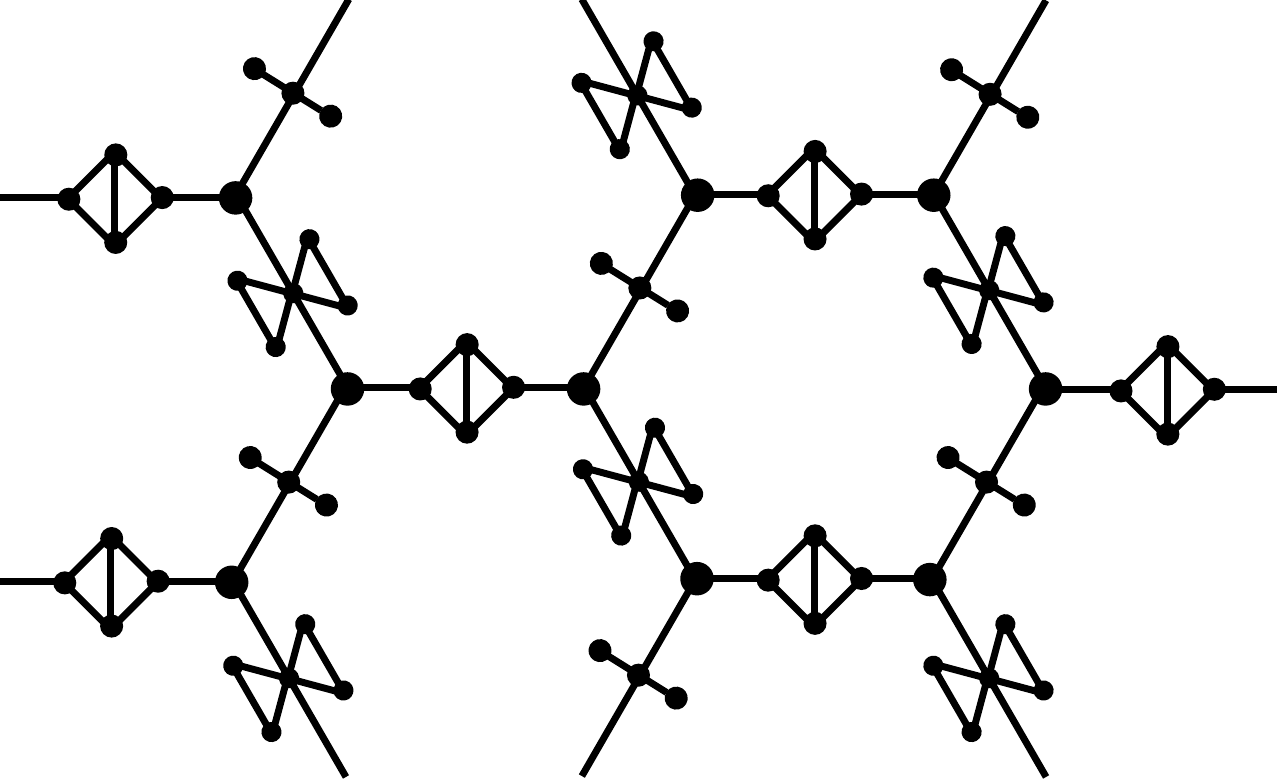}}
}
\caption{\small Quantum-graph graphene model with decorated edges.}
\label{fig:GrapheneDecorated}
\end{figure}

%%%%%%%%%%%%%%%%%%%%%%%%%%%%%%%%%%%%%%%%%%%%%%%%%
\subsection{Stacking by edges}\label{sec:iterative2}

When $n$ successive layers are connected by edges (or decorated edges), the matrix $Q$ in expression (\ref{detAhat}) becomes the $n\!\times\!n$ identity matrix $I_n$, and the dispersion function simplifies to
\begin{equation}\label{D2}
  D(z,\lambda) \;=\; \det\big( \tilde B_1(\lambda)\tilde B_2(\lambda) - \zeta(z,\lambda) I_n \big),
\end{equation}
and therefore the components of the Fermi surface are
\begin{equation}
  \zeta(z,\lambda) \;=\; \rho_j(\lambda),
  \quad j=1,\dots,n,
\end{equation}
where $\rho_j$ are the eigenvalues of the matrix~$\tilde B_1(\lambda)\tilde B_2(\lambda)$.

As the connector graphs are linear graphs, their spectral matrices $B_i(\lambda)$ are tridiagonal, with DtN matrices for the connector edges along the principal $2\!\times\!2$ submatrices.

%%%%%%%%%%%%%%%%%%%%%%%%%%%%%%%%%%%%%%%%%%%%%%%%%
%%%%%%%%%%%%%%%%%%%%%%%%%%%%%%%%%%%%%%%%%%%%%%%%%
\section{Multi-layer graphene}\label{sec:graphene}

We apply the theory developed in this work to quantum-graph models of multi-layer graphene structures.  Very general stacking of graphene, where the layers are shifted or rotated, results in a reducible Fermi surface.  We also discuss the conical singularities at wavevectors $(k_1,k_2)=\pm(2\pi/3,-2\pi/3)$ for single-layer graphene and how stacking multiple layers destroys them.

%%%%%%%%%%%%%%%%%%%%%%%%%%%%%%%%%%%%%%%%%%%%%%%%%
\subsection{The single layer}\label{sec:graphenesingle}

A graph model of graphene is hexagonal and bipartite, having two vertices and three edges of length~$1$ per fundamental domain.  Being bipartite, it is also separable at any vertex.

The most general quantum-graph model $(\Lo,\Ao)$ for which the differential operator on the edges is of the form $-d^2/dx^2 + q(x)$ features three potentials, one for each edge in a period, and two Robin parameters $\alpha_i$, one for each vertex $v_i$ ($i=1,2$) in a period.  The potentials will be denoted by $q_i(x)$ ($i=0,1,2$) as in Fig.\,\ref{fig:GrapheneS} and the corresponding transfer matrices by
\begin{equation}\label{Ti}
  T_i(\lambda) \;=\;
  \mat{1.3}{c_i(\lambda)}{s_i(\lambda)}{c_i'(\lambda)}{s_i'(\lambda)}
  \qquad
  (i=0,1,2).
\end{equation}

\begin{figure}[h]
\centerline{
\scalebox{0.3}{\includegraphics{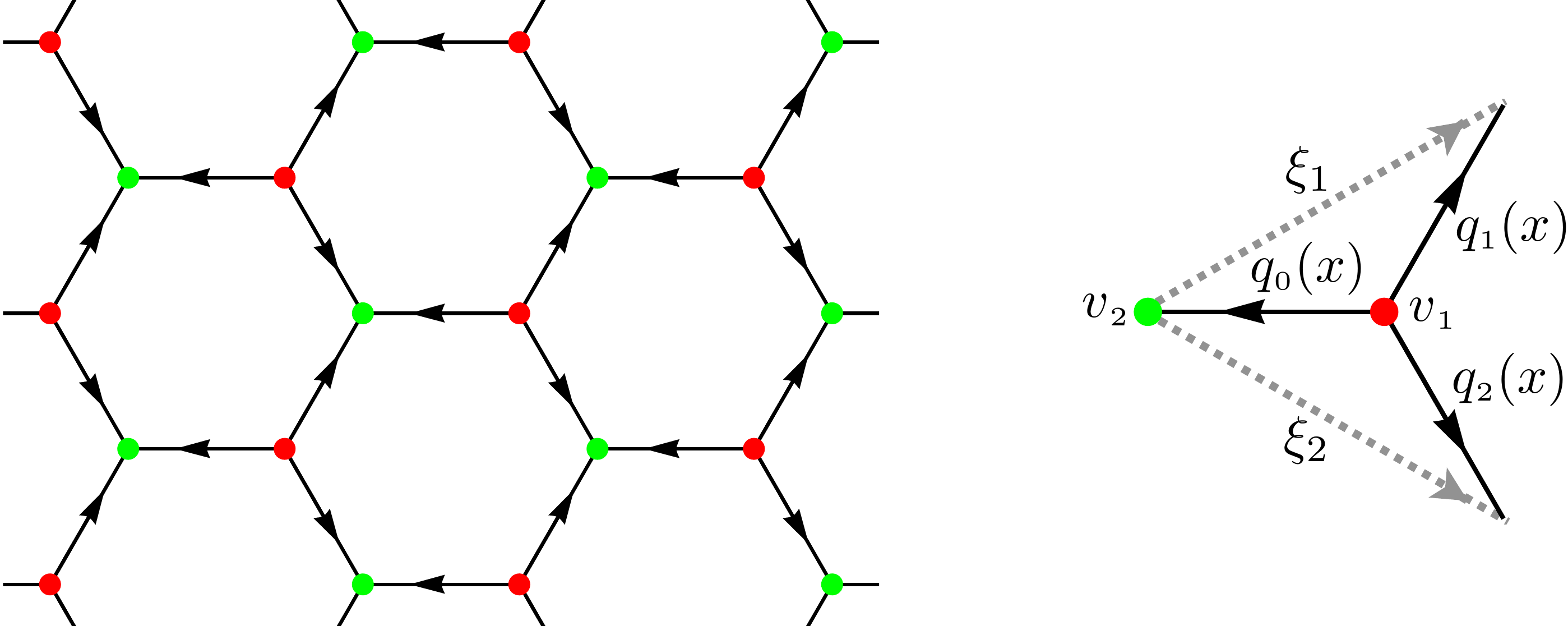}}
}
\caption{\small Single-layer graphene $\Go$ and its fundamental domain.  The arrows on the edges indicate the direction of the $x$-interval $[0,1]$ in the parameterization of the edges.  The vectors $\xi_1$ and $\xi_2$ generate the periodic shifts.}
\label{fig:GrapheneS}
\end{figure}

Let $\xi_1$ and $\xi_2$, as illustrated in Fig.\,\ref{fig:GrapheneS}, be generators of the periodicity in the sense that the action of $(n_1,n_2)\in\ZZ^2$ on $\Go$ shifts the graph along the vector $n_1\xi_1 + n_2\xi_2$ in the plane so that it falls exactly into itself.  The components of the vector $(z_1,z_2)$, the Floquet multipliers, are the eigenvalues of the shifts by $\xi_1$ and $\xi_2$ corresponding to a Floquet mode.  The spectral matrix (\ref{Ahat2}) of this quantum graph at energy $\lambda$ is
\begin{equation}
  \hat {\mathring A}_\lambda(z_1,z_2) \;=\;
  \renewcommand{\arraystretch}{1.2}
\left[
\begin{array}{cc}
  b_1(\lambda) & w(z,\lambda) \\
  \vspace{-2ex} \\
  w(z^{-1},\lambda) & b_2(\lambda)
\end{array}
\right],
\end{equation}
\begin{equation}
  b_1(\lambda) \;=\; -\frac{c_0(\lambda)}{s_0(\lambda)}-\frac{c_1(\lambda)}{s_1(\lambda)}-\frac{c_2(\lambda)}{s_2(\lambda)}-\alpha_1\,,
  \qquad
  b_2(\lambda) \;=\; -\frac{s'_0(\lambda)}{s_0(\lambda)}-\frac{s'_1(\lambda)}{s_1(\lambda)}-\frac{s'_2(\lambda)}{s_2(\lambda)}-\alpha_2\,,
\end{equation}
\begin{equation}\label{w}
  w(z,\lambda) \;=\; \frac{1}{s_0(\lambda)}+\frac{z_1}{s_1(\lambda)}+\frac{z_2}{s_2(\lambda)}\,.
\end{equation}
This is the function $w(z,\lambda)$ in~(\ref{wzlambda}).
Notice that $w(z,\lambda)$ depends only on the potentials $q_i(x)$ through their Dirichlet spectrum since only the functions $s_i(\lambda)$ appear in the definition of~$w(z,\lambda)$.
The dispersion function for $(\Go,\Ao)$ is
\begin{equation}\label{GrapheneD}
  D(z_1,z_2,\lambda) \;=\; \det \hat {\mathring A}_\lambda(z_1,z_2)
  \;=\; b_1(\lambda) b_2(\lambda) - \zeta(z,\lambda),
\end{equation}
in which
\begin{equation}
  \zeta(z,\lambda) \;=\; w(z,\lambda)w(z^{-1},\lambda).
\end{equation}
Observe that the dispersion functions of two different single-layer sheets of graphene have the same $\zeta(z,\lambda)$ exactly when corresponding edges are isospectral, because knowing the Dirichlet spectrum of a potential is equivalent to knowing its $s(\lambda)$ function~\cite[Ch.\,2 Theorem~5]{PoschelTrubowitz1987}.

All three edges in a period of a single layer are isospectral exactly when $s_0(\lambda)=s_1(\lambda)=s_2(\lambda)$, and in this case $\zeta(z,\lambda)$ separates~as
\begin{equation}
  \zeta(z,\lambda) \;=\; s_0(\lambda)^{-2} G(z_1,z_2),
\end{equation}
in which
\begin{equation}
  G(z_1,z_2) \;=\; (1+z_1+z_2)(1+z_1^{-1}+z_2^{-1}).
\end{equation}
The Fermi surface of a single layer at energy $\lambda$ is given by $D(z_1,z_2,\lambda)=0$, which reduces to
\begin{equation}\label{graphenedispersion}
  \Delta(\lambda) \;:=\; s_0(\lambda)^2\, b_1(\lambda)b_2(\lambda) \;=\; G(z_1,z_2).
\end{equation}
We call $\Delta(\lambda)$ the ``characteristic function" for this single-layer graphene model.

For $(z_1,z_2)=(e^{ik_1},e^{ik_2})$ on the torus $\TT^2$,
\begin{equation}
\begin{split}
   \tilde G(k_1,k_2) \;:=\; G(e^{ik_1},e^{ik_2}) &\;=\; \big| 1 + e^{ik_1} + e^{ik_2} \big|^2 \\
       &\;=\; 1 + 8\cos\frac{k_2-k_1}{2}\cos\frac{k_1}{2}\cos\frac{k_2}{2},
\end{split}
\end{equation}
and this has range $[0,9]$ as a function of real $k_1$ and~$k_2$, with its minima occuring at $\pm(2\pi/3,-2\pi/3)$~\cite[Lemma~3.3]{KuchmentPost2007}.  Thus the bands of this graphene model are the real $\lambda$-intervals over which $\Delta(\lambda)$ lies in~$[0,9]$.

Single-layer quantum-graph graphene sheets and tubes, with a common symmetric potential $q_0(x)$ on all edges, are treated in detail in~\cite{KuchmentPost2007}.  In this case, $b_1(\lambda)=b_2(\lambda)$ and the spectrum of the sheet is identical to that of the periodic Hill operator with potential $q_0(x)$ on a period.  In contrast to the Hill operator, the dispersion relation exhibits conical singularities, one for each energy $\lambda$ where $\Delta(\lambda)=0$.  Fig.~\ref{fig:Graphene-D1} shows a graph of~$\Delta(\lambda)$.
Conical singularities are discussed in section~\ref{subsec:conical}.

%%%%%%%%%%%%%%%%%%%%%%%%%%%%%%%%%%%%%%%%%%%%%%%%%
\subsection{Shifting and rotating}

We adopt terminology on shifted layers of graphene that is used in the literature.
The hexagonal graphene structure is invariant under translation by the sum $\xi_1+\xi_2$ of the two elementary shift vectors, as illustrated in Fig.\,\ref{fig:GrapheneSR}.  The shift by $(\xi_1+\xi_2)/3$ (dashed blue) places vertex~$v_2$ onto vertex~$v_1$ and places vertex~$v_1$ onto the center of the hexagon; this will be called the B-shift.  The shift by $2(\xi_1+\xi_2)/3$ (or $-(\xi_1+\xi_2)/3$, dotted orange) places $v_1$ onto $v_2$ and $v_2$ onto the center of the hexagon; this will be called the C-shift.  The unshifted graph is called the A-shift.

By rotating the graphene structure by $\pi$ about the center of an edge, the potentials reverse direction.  This is illustrated on the right of Fig.\,\ref{fig:GrapheneSR}, in which rotation is about the edge labeled~$0$.  Each labeled oriented edge corresponds to a potential $q_i(x)$, with the parameter $x$ increasing in the direction of the arrow.  The labels $0$,~$1$,~$2$ are preserved under rotation, but their orientations are reversed.  Equivalently, rotation effects the change $q_i(x) \mapsto q_i(1-x)$ of the potentials.
The rotation also switches the Robin conditions on the two vertices of a period.

Denote a single layer by $(\Lo,\Ao)$ and its $180^\circ$ rotation by $(\Lo,\Ao_\pi)$.
The potentials $q_i(x)$ and $q_i(1-x)$ have the same Dirichlet spectrum, which coincides with the roots of the function~$s_i(\lambda)$.  Therefore the function $w(z,\lambda)$ in (\ref{w}) is the same for both quantum graphs and their dispersion functions are polynomials in the same composite Floquet variable~$\zeta(z,\lambda)=w(z,\lambda)w(z^{-1},\lambda)$.

\begin{figure}[h]
\centerline{
\raisebox{13pt}{\scalebox{0.55}{\includegraphics{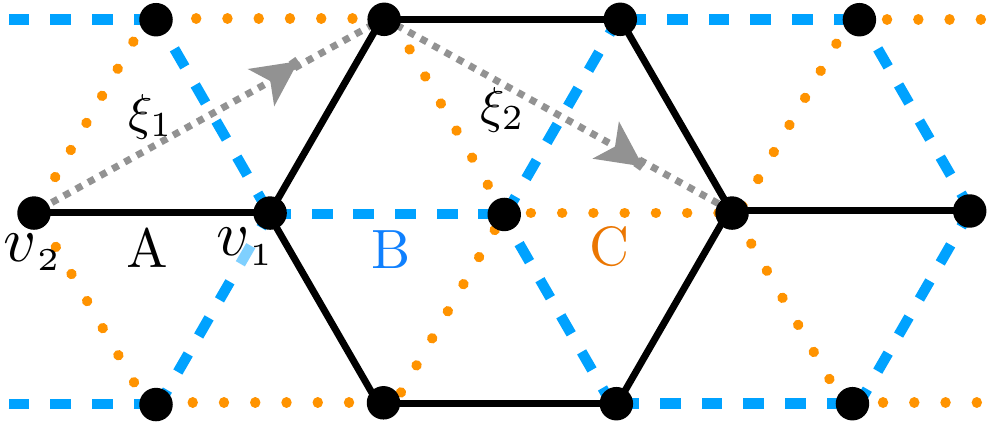}}}
\hspace{4em}
\raisebox{0pt}{\scalebox{0.36}{\includegraphics{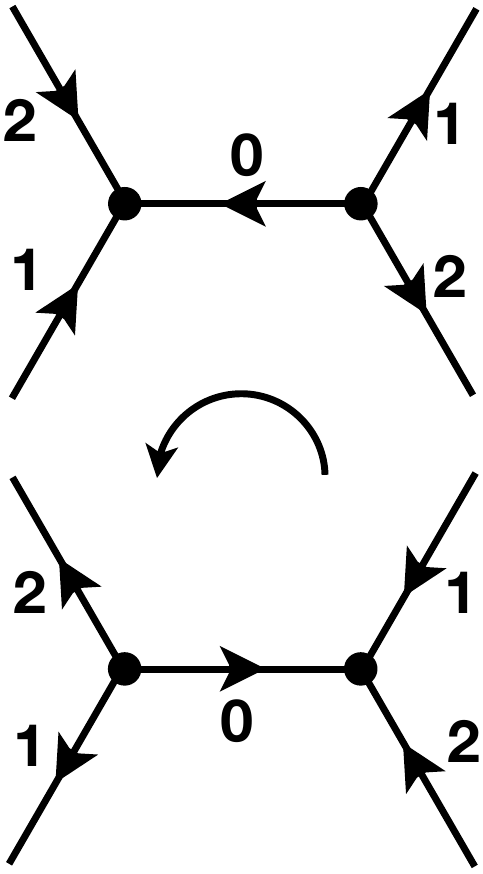}}}
}
\caption{\small Left: A- B- and C-shifts of graphene are illustrated in solid black, dashed blue, and dotted orange, as described in the text.  Right:  Rotating graphene by $180^\circ$ reverses the orientation of the potentials but preserves their Dirichlet spectra.}
\label{fig:GrapheneSR}
\end{figure}

\begin{figure}[h]
\centerline{
\scalebox{0.35}{\includegraphics{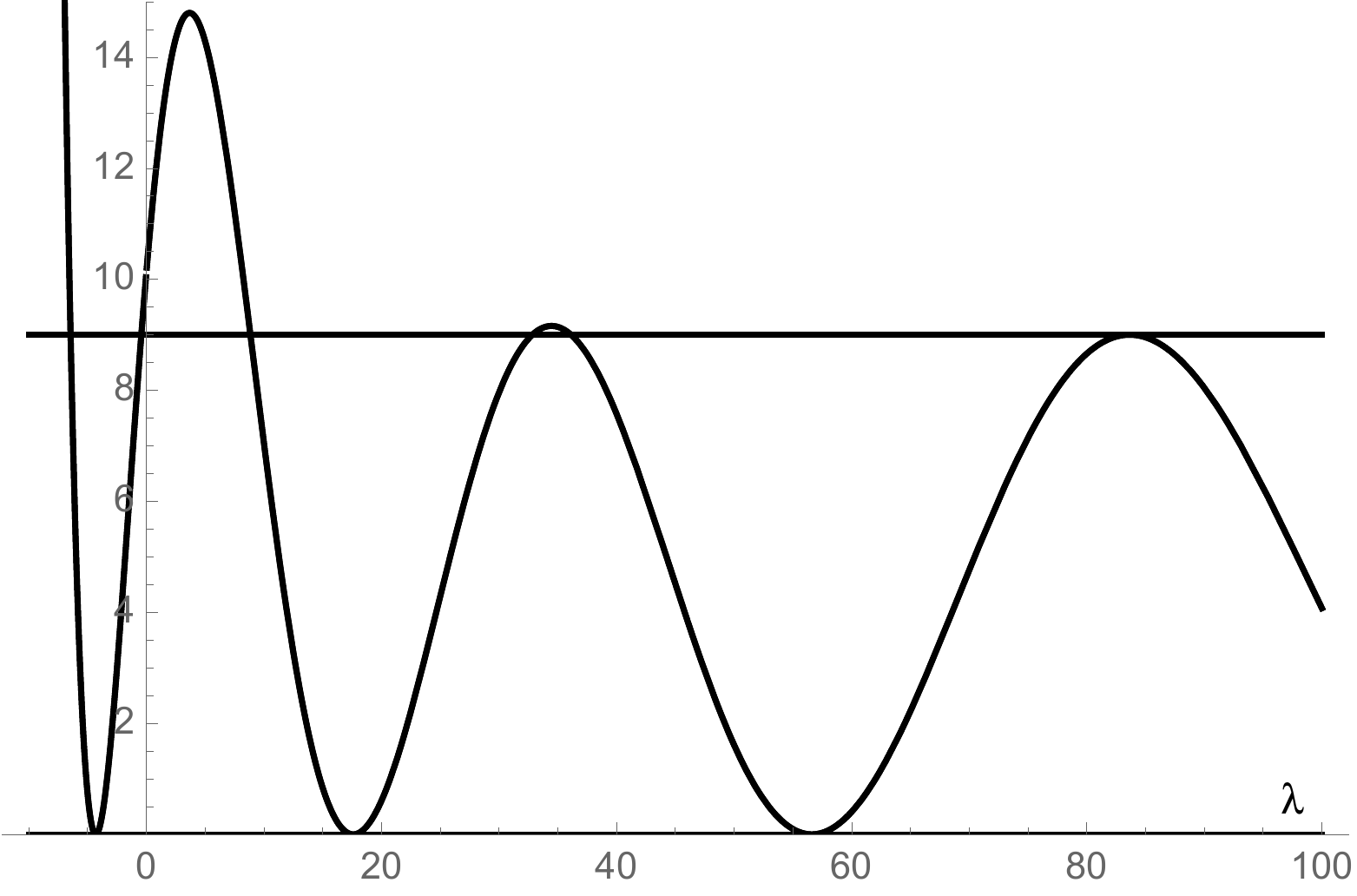}}\;\raisebox{5pt}{\small $\lambda$}
}
\caption{\small Graph of the characteristic function $\Delta(\lambda)$ of single-layer graphene, showing the first three spectral bands and the first three gaps.  The bands are the $\lambda$-intervals for which $\Delta(\lambda)\in[0,9]$, which is the range of the function $\tilde G(k_1,k_2)$.  The points where $\Delta(\lambda)=0$ correspond to conical singularities of the dispersion relation $D(e^{ik_1},e^{ik_2},\lambda)=0$, which occur inside the bands, as discussed in section~\ref{subsec:conical} and~\cite{KuchmentPost2007}}
\label{fig:Graphene-D1}
\end{figure}

%%%%%%%%%%%%%%%%%%%%%%%%%%%%%%%%%%%%%%%%%%%%%%%%%
\subsection{AA-stacking and rotation}\label{subsec:AA}

In AA-stacked graphene, each layer is stacked directly over the previous and each pair of vertically successive vertices is connected by an edge, as in Fig.\,\ref{fig:GrapheneAA}.  As a type-2 $n$-layer graph, the red vertices in a given period, together with the $n\!-\!1$ edges connecting them, form the connector graph $(\Sigma_1,B_1)$, and the green vertices and the edges connecting them form~$(\Sigma_2,B_2)$.
The hypotheses of Theorem~\ref{thm:type2} allow the potentials $q(x)$ on any pair of vertically aligned edges on two different layers to differ as long as the operators $-d^2/dx^2 + q_e(x)$ possess the same Dirichlet spectrum.  The theorem then guarantees that the Fermi surface of the layered structure is reducible with $n$ components.

A particular instance of AA-stacked graphene satisfying the hypotheses of the theorem is constructed from copies of a given single layer and its rotations about the center of an edge.
Let a single layer $(\Lambda,A_0)$ with arbitrary potentials on the three edges of a period and arbitrary Robin parameters on the two vertices be given.  Rotation of this graph by $180^\circ$ about the center of an edge, as described in the previous section and illustrated in Figs.~\ref{fig:GrapheneSR},\ref{fig:GrapheneAA} (right), results in a new layer of graphene $(\Lambda,A_\pi)$ with the same underlying graph~$\Lambda$ but with the potentials oriented in the opposite direction and the Robin parameters at the two vertices switched.

Thus Theorem~\ref{thm:type2} applies to an $n$-layer stack, with each layer being either $(\Lo,\Ao)$ or $(\Lo,\Ao_\pi)$, in any order, stacked in the AA sense.  The Fermi surface of this $n$-layer graphene has $n$ components.  According to section~\ref{sec:iterative2}, equation~(\ref{D2}), the relation $D(z,\lambda)=0$ reduces to $n$ components
\begin{equation}\label{mui}
  \mu_i(\lambda) \;=\; G(z_1,z_2)
  \qquad
  i=1,\dots,n,
\end{equation}
in which $\mu_i$ are the eigenvalues of the ``characteristic matrix"
\begin{equation}\label{Deltamatrix}
  \Delta(\lambda) \;=\; s_0(\lambda)^2\,\tilde B_1(\lambda) \tilde B_2(\lambda),
\end{equation}
which generalizes the characteristic function~(\ref{graphenedispersion}) by the same name for the single layer.

\begin{figure}[h]
\centerline{
\scalebox{0.35}{\includegraphics{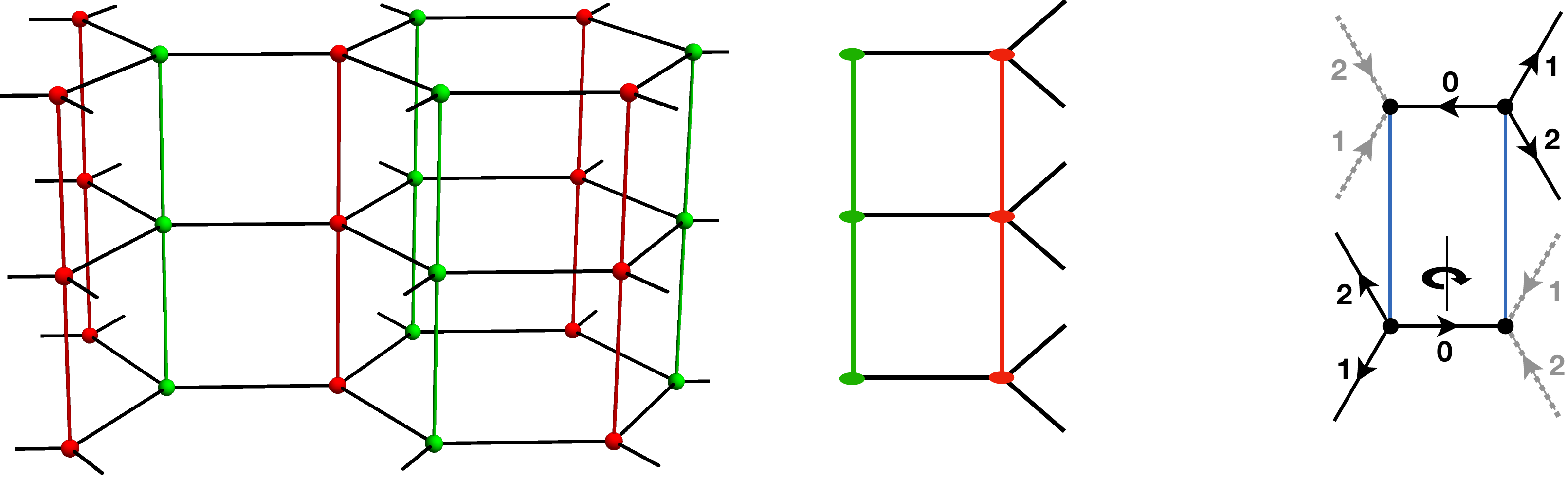}}
}
\caption{\small AA-stacked graphene in three layers and a fundamental domain thereof.  If the potentials on corresponding edges on different layers have the same Dirichlet spectrum, then the Fermi surface for the multi-layer graph is reducible.  This occurs, in particular, when a layer is rotated by $\pi$ about the center of an edge.}
\label{fig:GrapheneAA}
\end{figure}

\smallskip
{\bfseries Examples.}  (Fig.~\ref{fig:GrapheneAA-D2} and~\ref{fig:GrapheneAA-D3}) Let the layers be identical with identical potential $q_0(x)$ on all three edges of a period.  We take $q_0(x)$ to be symmetric about $x=1/2$ so that the DtN map for the edge is independent of the direction.  We choose
\begin{equation}\label{q0}
  q_0(x) \;=\; -16\,\cchi_{[1/3,2/3]}(x)
\end{equation}
($\cchi_Y(x)$ is the characteristic function of the set $Y\subset\RR$) so that the DtN map is explicitly computable and so that the spectrum of the single layer does have gaps (because $q_0(x)$ is not constant; see~\cite{KuchmentPost2007}).

On all of the connector edges, which are all of length~$1$, we take the potential $q_\conn(x)$ to be either $0$ or $q_0(x)$~or
\begin{equation}\label{qc}
  q_\conn(x) \;=\; -10\,\cchi_{[1/2,1]}(x),
\end{equation}
which is not symmetric about the center.  Fig.~\ref{fig:GrapheneAA-D2} and~\ref{fig:GrapheneAA-D3} show graphs of $\mu_i(\mu)$ for bi-layer and tri-layer graphene.  Each eigenvalue contributes a sequence of bands and gaps to the spectrum of the multi-layer graph---the bands for the $i^\text{th}$ sequence are the $\lambda$-intervals for which $\mu_i(\lambda)\in[0,9]$.  When the Dirichlet spectral function $s(\lambda)$ on the connecting edges is different from that of the layers, new thin bands are introduced.  Conical singularities, or Dirac cones, are discussed below.  These are characteristic features of single-layer graphene, and in special cases of AA-stacking, they persist, according to Proposition~\ref{prop:AAconical}.

\begin{figure}[h]
\centerline{
\scalebox{0.42}{\includegraphics{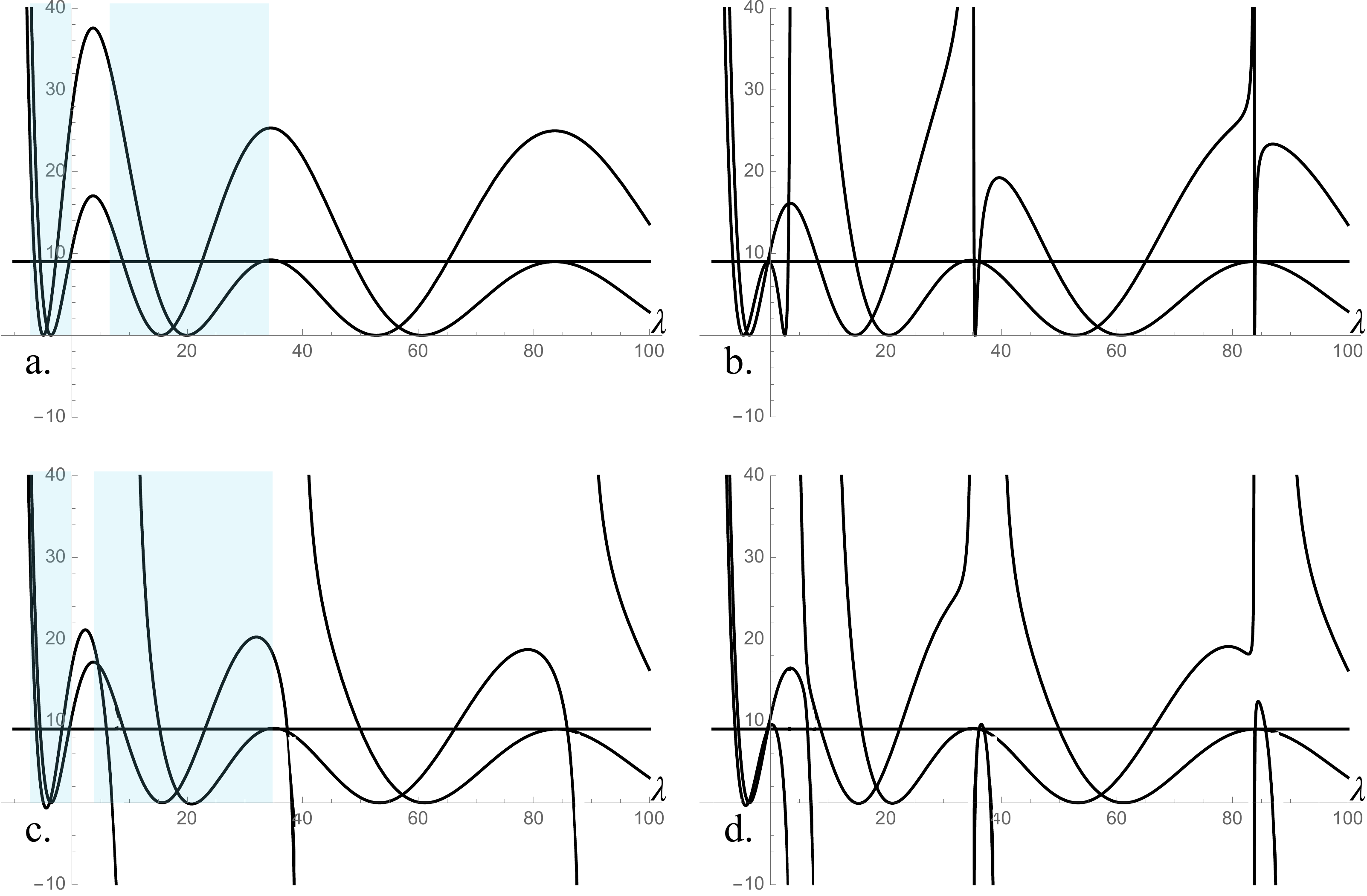}}
}
\caption{\small For double-layer AA-stacked graphene, the two eigenvalues $\mu_i(\lambda)$ ($i=1,2$) of the characteristic matrix $\Delta(\lambda)$ give two sets of spectral bands and gaps.  The bands are the $\lambda$-intervals for which $\mu_i(\lambda)\in[0,9]$.
The dispersion relation for the shaded $\lambda$-intervals are shown in Fig.\,\ref{fig:GrapheneAA-D2-3D}.
{\bfseries a.} The connecting edges have the same potential as those of the layers~(\ref{q0}).  {\bfseries b.} The potentials~(\ref{qc}) of the two connecting edges are equal to each other but different from that of the layers~(\ref{q0}). This creates additional thin bands (within each of the sets of spectral bands), which have conical singularities of their own.
{\bfseries c.} The potentials of the two connecting edges are different from each other ($q(x)=0$ and~\ref{q0}).  This destroys conical singularities and introduces additional thin gaps in their place.  Additionally, new thin bands are introduced just below the vertical asymptotes. {\bfseries d.} The potentials of the two connecting edges are different from each other ($q(x)=0$ and~\ref{qc}).}
\label{fig:GrapheneAA-D2}
\end{figure}

\begin{figure}[h]
\centerline{
\scalebox{0.4}{\includegraphics{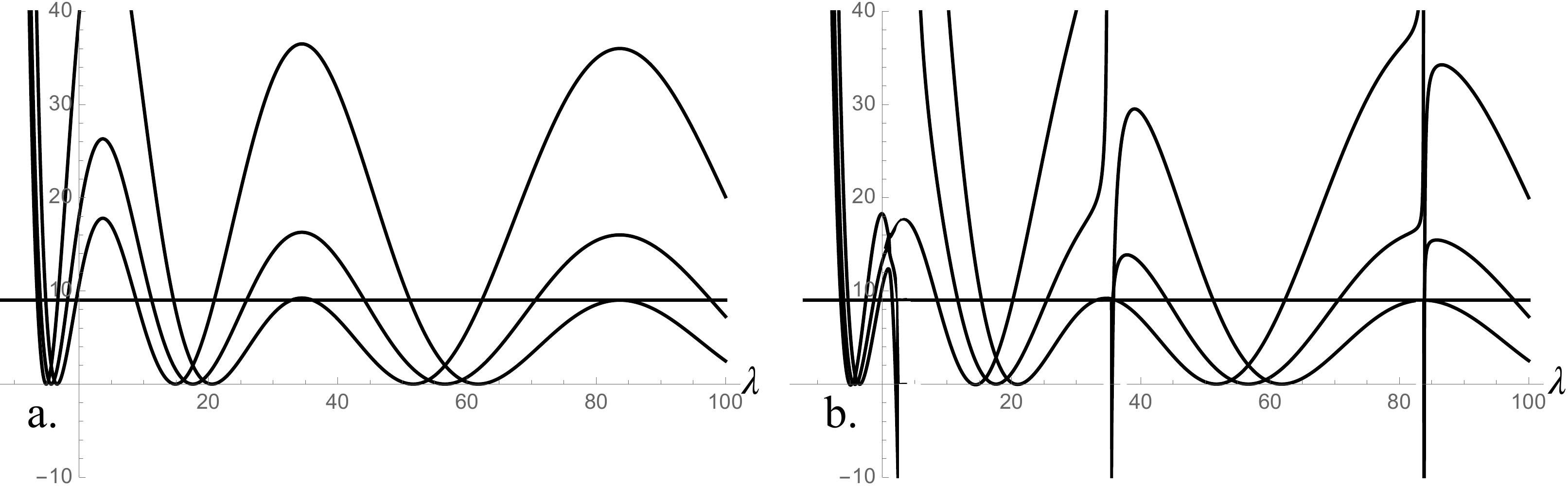}}
}
\caption{\small The three eigenvalues $\mu_i(\lambda)$ ($i=1,2,3$) of the characteristic matrix $\Delta(\lambda)$ for triple-layer AAA-stacked graphene, giving three sets of spectral bands and gaps.  The bands are the $\lambda$-intervals for which $\mu_i(\lambda)\in[0,9]$.  {\bfseries a.}~The connecting edges have the same potential as those of the layers.  {\bfseries b.} The connecting edges have potential~(\ref{q0}) at vertex $v_1$ and potential~(\ref{qc}) at vertex $v_2$; this creates additional thin gaps and destroys Dirac cones (the minima are slightly below~$0$).}
\label{fig:GrapheneAA-D3}
\end{figure}

%%%%%%%%%%%%%%%%%%%%%%%%%%%%%%%%%%%%%%%%%%%%%%%%%
\subsection{AB-stacking}\label{subsec:AB}

In AB-stacked (or ABABA, ABBA, {\itshape etc.}) graphene, the layers are in the A-shift or the B-shift and are coupled by a single edge per period.  We allow any number of layers with the A- and B-shifts arranged in arbitrary order.
Fig.\,\ref{fig:GrapheneAB} illustrates three layers with alternating shifts.

In each layer, we allow both $(\Lo,\Ao)$ and $(\Lo,\Ao_\pi)$ or any potentials $q_i(x)$ ($i=1,2,3$) as long as, for each~$i$, the Dirichlet spectra are invariant across layers.
As noted above, this guarantees that the function $\zeta(z,\lambda)$ is independent of the layer since it depends only on the Dirichlet spectral functions $s_i(\lambda)$, which are equivalent to the Dirichlet spectra of the potentials~$q_i(x)$.
Note that isospectrality (which is explicitly required for type~2) arises for graphene in type-1 stacking.
Thus the dispersion function of each layer is of the form~(\ref{GrapheneD}) with different $b_i(\lambda)$ but the same~$\zeta(z,\lambda)$.  
In any period of this layered structure, $n$ vertices, one per layer, are aligned along a vertical line, and these are connected by edges.  These vertices serve as vertices of separation of the individual layers.  Thus Theorem~\ref{thm:type1} on type-1 multi-layer graphs applies.

\begin{figure}[h]
\centerline{
\scalebox{0.4}{\includegraphics{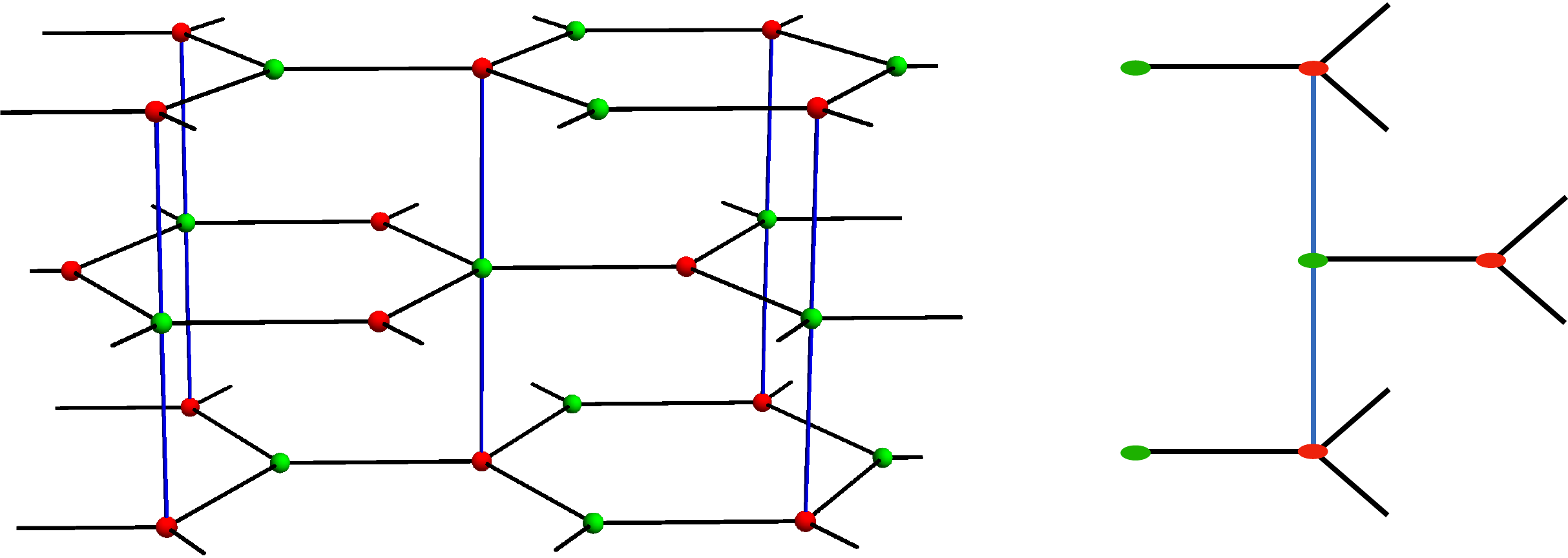}}
}
\caption{\small AB-stacked graphene in three layers (ABA) and a fundamental domain thereof.}
\label{fig:GrapheneAB}
\end{figure}

\begin{figure}[h]
\centerline{
\scalebox{0.4}{\includegraphics{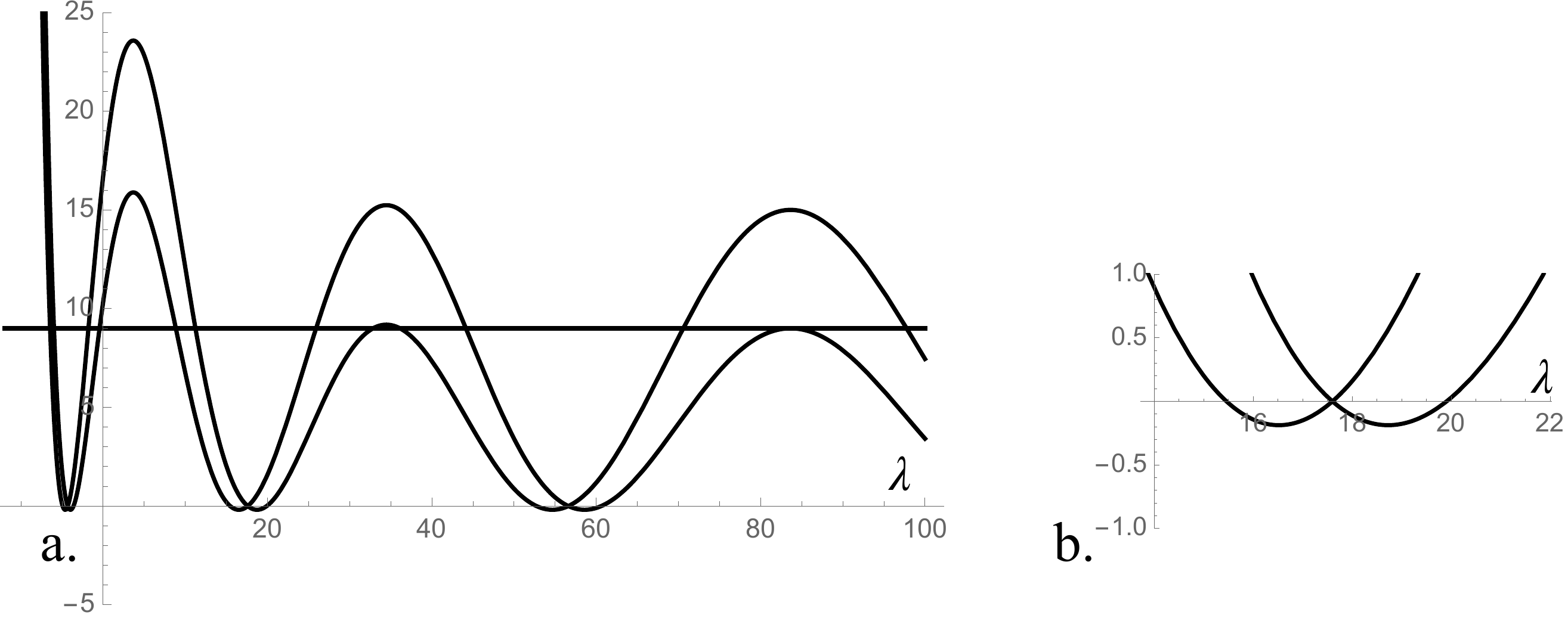}}
}
\caption{\small For double-layer AB-stacked graphene, these are graphs of $s(\lambda)^2\mu_i(\lambda)$ ($i=1,2$), where $\mu_i(\lambda)$ are the two roots of $P(\zeta(z,\lambda),\lambda)$, as in Theorem~\ref{thm:type2}.  Each root gives a set of spectral bands and gaps.  The bands are the $\lambda$-intervals for which $\mu_i(\lambda)\in[0,9]$.  {\bfseries a.}~The connecting edge has the same potential as those of the layers.
{\bfseries b.}~A close view near $\lambda=20$ shows that the graphs of $\mu_i(\lambda)$ cross the horizontal axis, and thus Dirac cones are not present.}
\label{fig:GrapheneAB-D2}
\end{figure}

\smallskip
{\bfseries Examples.}  (Fig.\,\ref{fig:GrapheneAB-D2}) As in the previous subsection, let the layers be identical with identical symmetric potential $q_0(x)$ on all edges.  Fig.~\ref{fig:GrapheneAB-D2} shows the graphs of the roots $\mu_i(\lambda)$ of the polynomial $P(\zeta,\lambda)$---see Theorems~\ref{thm:type1} and~\ref{thm:reducible1}.  Conical singularities of the dispersion relation are discussed below.

%%%%%%%%%%%%%%%%%%%%%%%%%%%%%%%%%%%%%%%%%%%%%%%%%
\subsection{ABC-stacking}\label{subsec:ABC}

In ABC-stacked graphene, all three shifts are stacked, as illustrated in Fig.\,\ref{fig:GrapheneABC}.  The number of components of the Fermi surface of the ABC-stacked structure is equal to the number of layers.  We leave the details of how to use Theorems~\ref{thm:type1} and~\ref{thm:type2} to prove this to the reader.  The arguments are similar to those described for the more general mixed stacking below.

\begin{figure}[h]
\centerline{
\scalebox{0.3}{\includegraphics{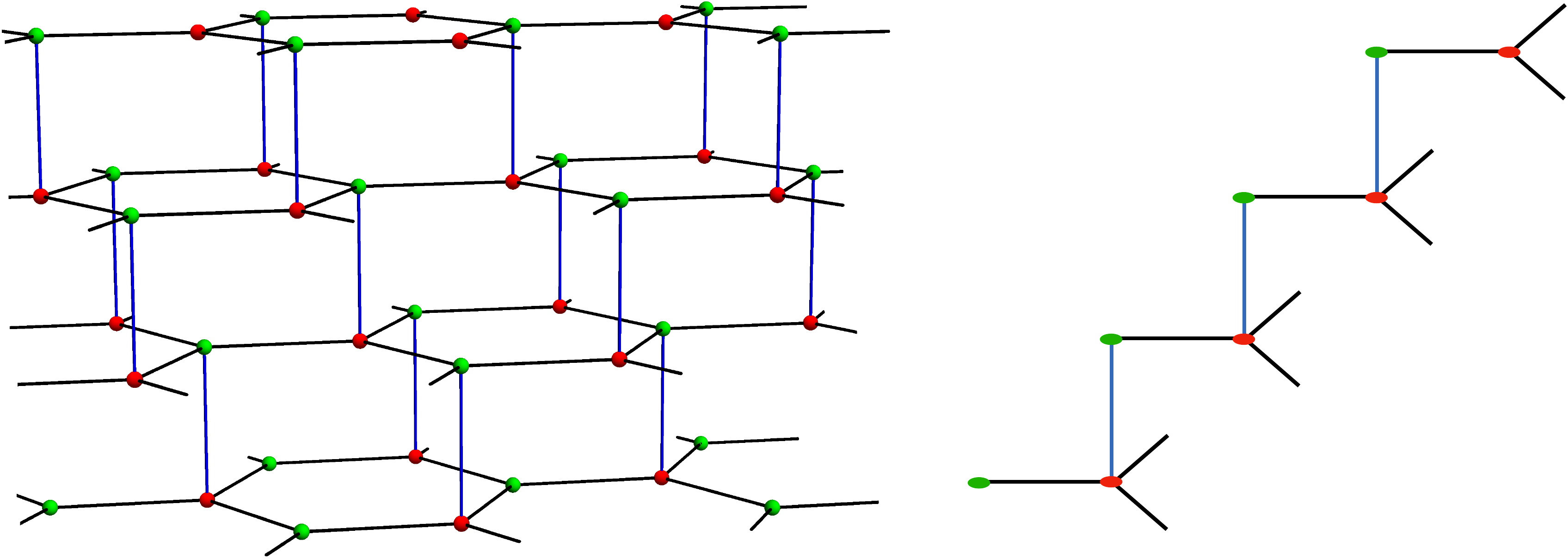}}
}
\caption{\small ABC-stacked graphene in four layers and a fundamental domain thereof.}
\label{fig:GrapheneABC}
\end{figure}

%%%%%%%%%%%%%%%%%%%%%%%%%%%%%%%%%%%%%%%%%%%%%%%%%
\subsection{Mixed stacking}\label{subsec:mixed}

Several graphene sheets can be stacked with arbitrary shifts (A,B,C) to obtain a mixed-stacking multi-layer sheet of graphene with reducible Fermi surface, as long as all the individual layers have dispersion function that is a polynomial in the same composite Floquet variable~$\zeta(z,\lambda)$.
As discussed above, this occurs when all vertically aligned edges are Dirichlet-isospectral.  Fig.\,\ref{fig:GrapheneM} depicts five layers stacked with mixed shifts.  The dispersion function is a polynomial in $\zeta(z,\lambda)$ whose degree is the number of single sheets of graphene in the stack.  The proof of this uses iterated application of the theorems on type-1 and type-2 multi-layer constructions and section~\ref{subsec:several}, as described next.

Let $\Sigma_1$ and $\Sigma_2$ be $n$-layer and $m$-layer AA-stacked graphene quantum graphs (type\,2).  Let the individual layers of both have dispersion functions that are polynomials in a common $\zeta(z,\lambda)$.
Let $u_1$ and $u_n$ be corresponding (vertically aligned) vertices in the first and $n$-th layers of $\Sigma_1$, and let $v_1$ and $v_m$ be corresponding vertices in the first and $m$-th layers of $\Sigma_2$.

Observe that $\Sigma_1^{u_n}$ has dispersion function that is also a function of the same $\zeta(z,\lambda)$.  This is because a single graphene layer is separable at any vertex, and thus $\Sigma_1^{u_n}$ can be viewed as a type-2 $(n\!-\!1)$-layer graph with connectors that consist of the edges between the $n\!-\!1$ layers plus decorations.  The same is true of~$\Sigma_2^{v_0}$.  Therefore $\Sigma_1$ and $\Sigma_2$ can be coupled by an edge between $u_n$ and $v_0$ according to a two-layer type-1 construction, resulting in a quantum graph $\G$ with dispersion function that is a polynomial in $\zeta(z,\lambda)$.

This construction could just as well be carried out using $\Sigma_2^{v_m}$ in place of $\Sigma_2$, resulting in $\G^{v_m}$, whose dispersion function is a polynomial in $\zeta(z,\lambda)$.  Now yet another AA-stacked multi-layer graphene construction $\Sigma_3$ with the same $\zeta(z,\lambda)$ can be attached to $\G$, and so on.
These arguments need to modified somewhat if any of the AA-stacked sections $\Sigma_i$ consists of only one layer, as in ABC-stacked graphene.

\begin{figure}[h]
\centerline{
\scalebox{0.3}{\includegraphics{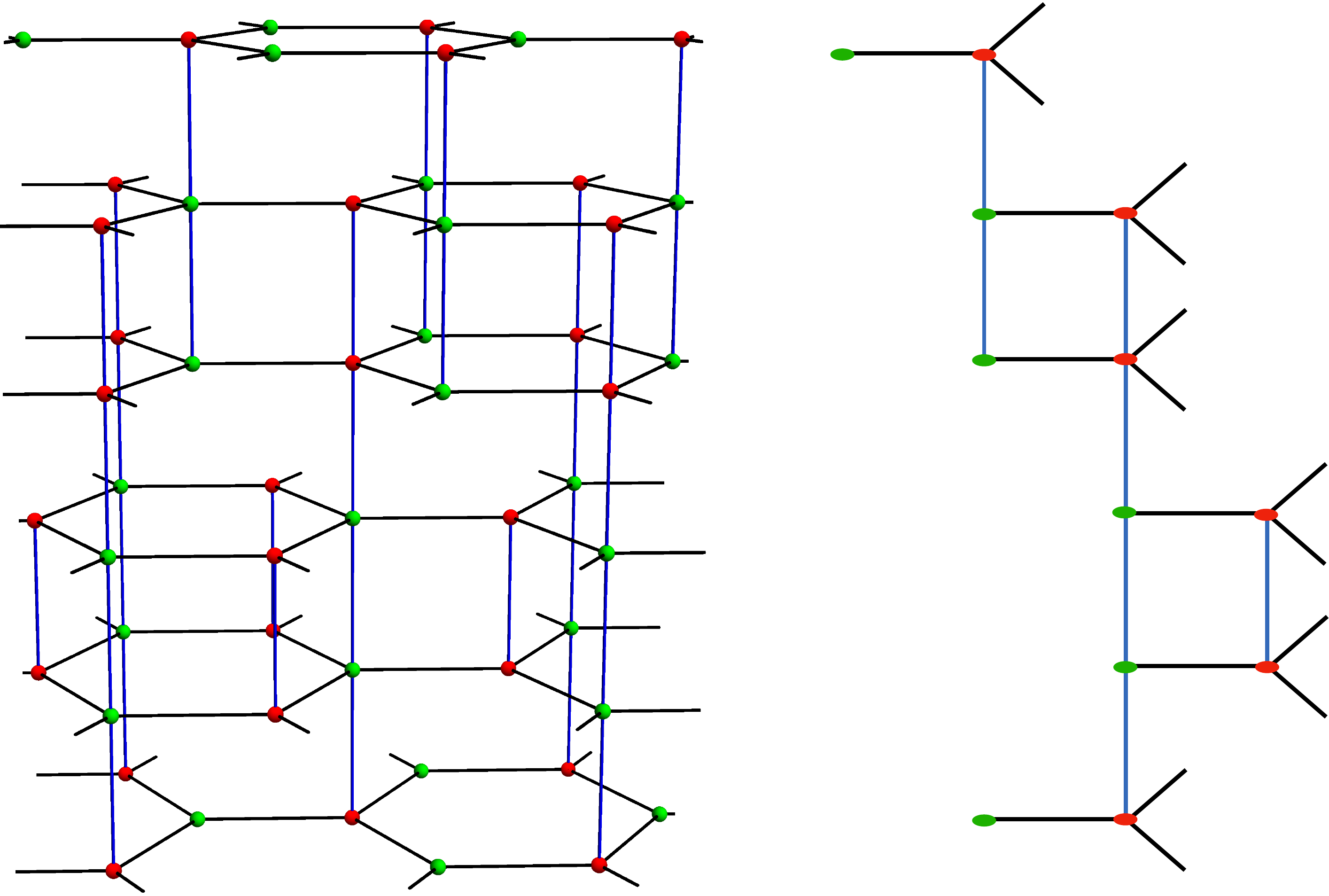}}
}
\caption{\small Five layers of graphene in mixed stacking, and a fundamental domain thereof.}
\label{fig:GrapheneM}
\end{figure}

%%%%%%%%%%%%%%%%%%%%%%%%%%%%%%%%%%%%%%%%%%%%%%%%%
\subsection{Conical singularities}\label{subsec:conical}

Single-layer graphene is famous for the ``Dirac cone" feature of its dispersion relation $D(e^{ik_1},e^{ik_2},\lambda)=0$.  The Dirac cone is a conical singularity in (a branch of) the dispersion relation $D(\bar k,\lambda)=0$ ($\bar k=(k_1,k_2)$), where it has the approximate form
\begin{equation}
  (\lambda-\lambda_0)^2 \;\approx\; c\, \big| \bar k - \bar k_0 \big|^2
\end{equation}
($c>0$) to leading order for $(\bar k,\lambda)$ near $(\bar k_0,\lambda_0)$.
There is a large amount of literature on this, for example~\cite{AbergelApalkovBerashevic2010a,Castro-NetGuineaPeres2009a,CastroNovoselovMorozov2007a,McCann2006a,McCannAbergelFalko2007a,PartoensPeeters2006a}.  For Schr\"odinger operators in $\RR^2$ with very general honeycomb-type potentials, the existence of circular Dirac cones and their stability is proved by perturbation methods in~\cite{FeffermanWeinstein2012}.  In~\cite{BerkolaikoComech2018}, the authors provide a treatment for more general operators with hexagonal periodicity that is highlights the representations of various symmetry groups.
 
For single-layer graphene with equal and symmetric potentials on all edges, conical singularities occur at energies~$\lambda$ for which $\Delta(\lambda)=0$~\cite{KuchmentPost2007}.  This is because the dispersion relation is $\Delta(\lambda)=\tilde G(k_1,k_2)$ and
both $\Delta(\lambda)$ and $\tilde G(k_1,k_2)$ have $0$ as nondegenerate minima (here, $\tilde G(k_1,k_2):=G(e^{ik_1},e^{ik_2})$).  The quasimomenta at these points are $\pm(2\pi/3,-2\pi/3)$.

When $n$ layers are stacked in the AA sense, the $n$ branches of the dispersion relation are~$\mu_i(\lambda)=\tilde G(k_1,k_2)$ (\ref{mui}).  When the connecting graph (sequence of connecting edges in this case) for the red vertices is the same as that for the green vertices, then $B_1(\lambda)=B_2(\lambda)$.  The symmetry of the potentials about the centers of the edges implies $c(\lambda)=s'(\lambda)$; this, together with equal Robin parameters for each layer results in $\mathbf{b}_1(\lambda)=\mathbf{b}_2(\lambda)$.
Thus $\tilde B_1(\lambda)=\tilde B_2(\lambda)$, and the $\mu_i(\lambda)$ are eigenvalues of the positive matrix~$\Delta(\lambda)$ (\ref{Deltamatrix}). Therefore, in each spectral band, $\mu_i(\lambda)$ reaches its minimal value of $0$, and a conical singularity occurs (as long as the minima are nondegenerate), as shown in Fig.~\ref{fig:GrapheneAA-D2}(a,b) and Fig.~\ref{fig:GrapheneAA-D3}(a)---this is stated in the next proposition.
\begin{proposition}[Condition for conical singularities]\label{prop:AAconical}
For an AA-stacked multi-layer graphene structure: if
(1) the potentials on all the edges of all layers (black edges in Fig.~\ref{fig:GrapheneAA}) are identical and symmetric about the center of the edge; (2) in each layer, the two Robin parameters are equal; and (3) the potentials connecting the green vertices (green edges in Fig.~\ref{fig:GrapheneAA}) are the same as those connecting the red vertices (red edges in Fig.~\ref{fig:GrapheneAA}), then the dispersion relation has a conical singularity at each energy $\lambda$ for which an eigenvalue $\mu(\lambda)$ of $\Delta(\lambda)$ is equal to zero and the second derivative of $\mu(\lambda)$ is nonzero.
\end{proposition}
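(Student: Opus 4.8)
The plan is to reduce the claim to the local structure of the scalar dispersion relation $\mu_i(\lambda) = \tilde G(k_1,k_2)$ near the point where $\mu_i$ vanishes, and then invoke the already-known fact that $\tilde G(k_1,k_2)$ has a nondegenerate zero minimum at $\bar k_0 = \pm(2\pi/3,-2\pi/3)$. First I would verify that hypotheses (1)--(3) force $\tilde B_1(\lambda) = \tilde B_2(\lambda)$: symmetry of the edge potentials about the center gives $c(\lambda) = s'(\lambda)$ for the layer edges (and likewise for the connecting edges, whose spectral matrices $B_1,B_2$ are therefore equal by (3)), while equality of the two Robin parameters in each layer makes the diagonal parts $\mathbf b_1(\lambda) = \mathbf b_2(\lambda)$ coincide; summing, $\tilde B_1(\lambda) = \tilde B_2(\lambda) =: \tilde B(\lambda)$. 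Hence the characteristic matrix $\Delta(\lambda) = s_0(\lambda)^2\, \tilde B(\lambda)^2$ is a perfect square of a real symmetric matrix, so all its eigenvalues $\mu_i(\lambda)$ are nonnegative real on the real $\lambda$-axis.

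Next I would pin down what it means for a band to "contain" a zero of $\mu_i$. A point $\lambda$ is in the $i$-th spectral band exactly when $\mu_i(\lambda) \in [0,9]$, the range of $\tilde G$; since $\mu_i \geq 0$, each band edge where $\mu_i$ attains a local value of $0$ is an interior point of that band (values just above $0$ are still in $[0,9]$), so I can work in an open neighborhood on both sides. Near such a $\lambda_0$ with $\mu_i(\lambda_0) = 0$ and $\mu_i''(\lambda_0) \neq 0$, nonnegativity forces $\mu_i'(\lambda_0) = 0$, so $\mu_i(\lambda) = \tfrac12 \mu_i''(\lambda_0)(\lambda - \lambda_0)^2 + O((\lambda-\lambda_0)^3)$ with $\mu_i''(\lambda_0) > 0$. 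On the other side, $\tilde G(\bar k) = \tfrac12 H(\bar k - \bar k_0) + O(|\bar k - \bar k_0|^3)$ where $H$ is the (positive-definite, by \cite[Lemma~3.3]{KuchmentPost2007}) Hessian of $\tilde G$ at $\bar k_0$. Setting the two Taylor expansions equal gives, to leading order, $\tfrac12 \mu_i''(\lambda_0)(\lambda - \lambda_0)^2 = \tfrac12 (\bar k - \bar k_0)^{\!t} H (\bar k - \bar k_0)$, i.e. $(\lambda - \lambda_0)^2 \approx c\,|\bar k - \bar k_0|^2$ after diagonalizing $H$ and rescaling, with $c > 0$; this is exactly the conical-singularity normal form.

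The remaining point is to argue that this formal leading-order matching genuinely produces a conical singularity of the full dispersion relation $D(e^{ik_1},e^{ik_2},\lambda) = 0$, not merely of one factor. Since Theorem~\ref{thm:type2} (via equation~(\ref{D2})) factors $D$ over the roots $\mu_i$, the zero set of $D$ near $(\bar k_0,\lambda_0)$ is the union of the zero sets of $\mu_i(\lambda) - \tilde G(\bar k)$, and on the branch with index $i$ the implicit equation $\mu_i(\lambda) = \tilde G(\bar k)$ has exactly the conical structure above; I would cite the standard definition of conical singularity recalled in the text and note that the condition $\mu_i''(\lambda_0) \neq 0$ is precisely what prevents the quadratic term from degenerating. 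The main obstacle I anticipate is bookkeeping the nondegeneracy claims carefully: I must confirm that $\mu_i(\lambda)$ is real-analytic (or at least $C^2$) near $\lambda_0$ --- which follows because $\tilde B(\lambda)$ is meromorphic and $\lambda_0$, being interior to a band, avoids its poles, and because a simple eigenvalue of a symmetric analytic family is analytic --- and that the hypothesis is stated for a \emph{simple} root $\mu_i(\lambda_0)=0$ so that no eigenvalue crossing spoils the expansion; if $\mu_i$ is a repeated root one would need the additional remark that repeated roots give several coincident cones, which is harmless. The genuinely delicate step, as in \cite{KuchmentPost2007}, is the nondegeneracy of the minimum of $\tilde G$ at $\bar k_0$, but that is supplied by the cited lemma, so here it reduces to transcribing it.
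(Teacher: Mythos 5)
Your proposal is correct and follows essentially the same route as the paper's own argument (given in the paragraph preceding the proposition): hypotheses (1)--(3) force $\tilde B_1(\lambda)=\tilde B_2(\lambda)$, so $\Delta(\lambda)=s_0(\lambda)^2\tilde B_1(\lambda)^2$ is positive semidefinite, its eigenvalues $\mu_i(\lambda)$ attain nondegenerate zero minima, and matching against the nondegenerate zero minimum of $\tilde G$ at $\pm(2\pi/3,-2\pi/3)$ from \cite[Lemma~3.3]{KuchmentPost2007} yields the cone. You merely make explicit the Taylor-expansion matching and the regularity/simplicity caveats for $\mu_i$ that the paper leaves implicit.
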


This proposition can also be obtained from~\cite[Theorem\,2.4]{BerkolaikoComech2018}, as the conditions imply symmetry under rotation by $\pi/3$, inversion, and reflection, in the plane of the layers.

Incidentally, condition (1) in Proposition~\ref{prop:AAconical} apparently cannot be relaxed.  The proof relies on two conditions on the potentials of the layers:  They must all be isospectral and they must be symmetric.  It is known (see~\cite{PoschelTrubowitz1987}) that the Dirichlet spectrum completely determines the potential $q(x)$ within the class of symmetric potentials.  Therefore all potentials on all the layers must be equal.

When the green and red vertices are connected differently, $\tilde B_1(\lambda)\tilde B_2(\lambda)$ is not necessarily a positive matrix, and indeed the $\mu_i(\lambda)$ cross~$0$ linearly, at which points there are nondegenerate band edges.  This is illustrated in Fig.~\ref{fig:GrapheneAA-D2}(c,d) and Fig.~\ref{fig:GrapheneAA-D3}(b).  Dispersion relations in $(k_1,k_2,\lambda)$-space are shown in Fig.\,\ref{fig:GrapheneAA-D2-3D}.

For AB-stacked graphene, the numerical computation in Fig.~\ref{fig:GrapheneAB-D2}(a,b) shows $\mu_i(\lambda)$ crossing $0$ linearly at each point where it vanishes, and thus each of these points corresponds to a nondegenerate band edge (and not a conical singularity).

\begin{figure}[h]
\centerline{
\scalebox{0.4}{\includegraphics{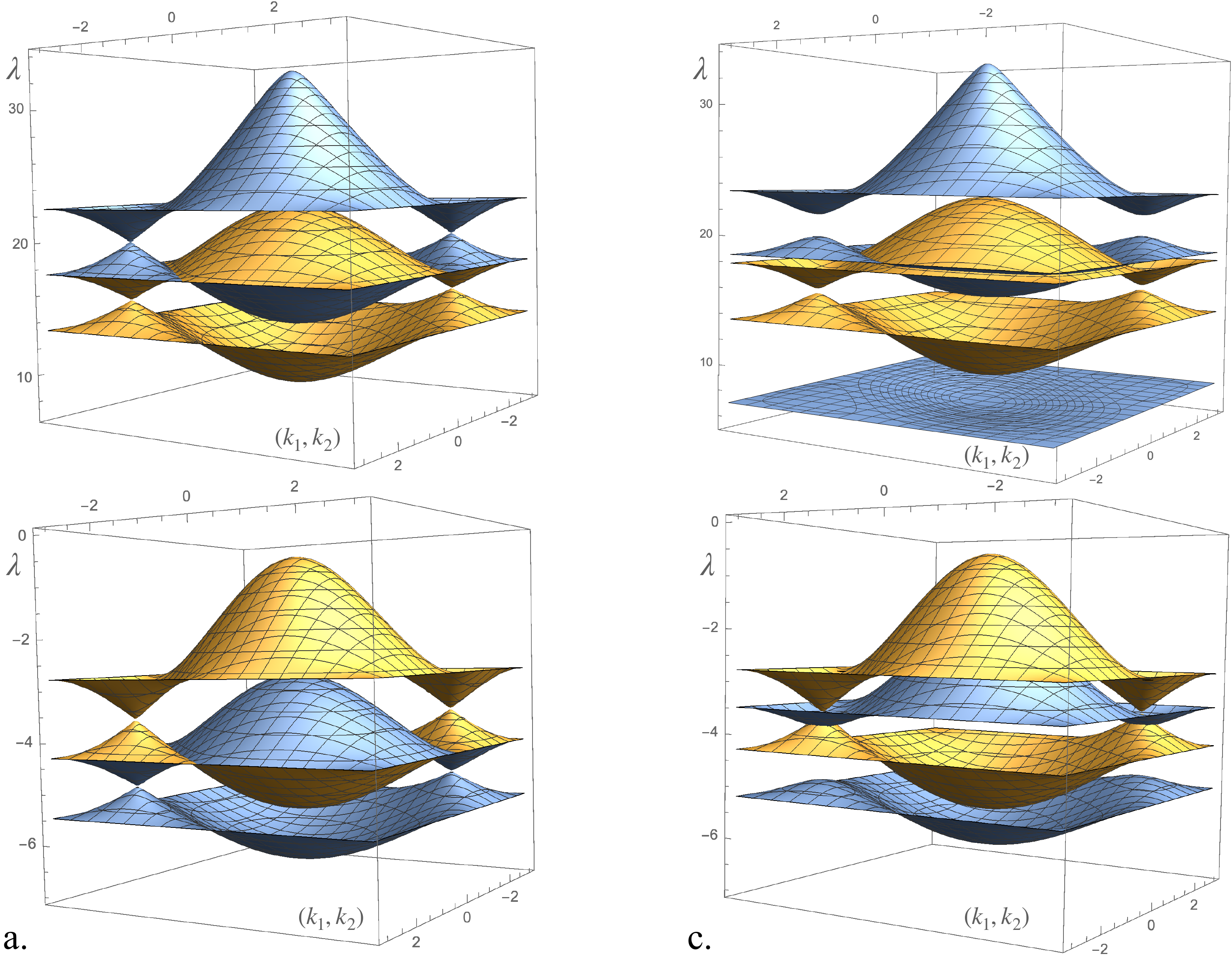}}
}
\caption{\small The dispersion relations for AA-stacked graphene from examples (a) and (c) in Fig.~\ref{fig:GrapheneAA-D2}, shown for energies $\lambda$ within the shaded intervals in Fig.~\ref{fig:GrapheneAA-D2}(a,c).  The two sets of bands coming from the eigenvalues $\mu_i(\lambda)$ ($i=1,2$) are in blue and yellow.  {\bfseries a.} The two connecting edges have the same potential.  Each of the energies $\lambda$ where the $\mu_i(\lambda)$ hit zero in Fig.~\ref{fig:GrapheneAA-D2}(a) exhibits a conical singularity at two different quasimomenta $(k_1,k_2)$.  {\bfseries c.} The two connecting edges have different potentials.  All of the conical points open into gaps.  In the upper interval, an additional thin band appears.}
\label{fig:GrapheneAA-D2-3D}
\end{figure}

%%%%%%%%%%%%%%%%%%%%%%%%%%%%%%%%%%%%%%%%%%%%%%%%%
%%%%%%%%%%%%%%%%%%%%%%%%%%%%%%%%%%%%%%%%%%%%%%%%%
\section{Some irreducible Fermi surfaces}\label{subsec:irreducible}

We give two examples of multi-layer quantum graphs that are not of type 1 or type 2 and whose Fermi surface is not reducible for some open set of energies.

\smallskip
The first example is a bi-layer graph with identical tripartite layers; the single layer is shown in Fig.~\ref{fig:Tripartite}.  Each edge of the single layer has the same symmetric potential.  Two copies of this layer are connected by two edges per period, one edge between green vertices and one between red, and these two edges have potentials from different asymmetry classes as defined in~\cite{Shipman2019} (otherwise the Fermi surface would be reducible by \cite[Theorem\,4]{Shipman2019}).
With the vertices ordered green-red-blue, the spectral matrix for this graph is
\begin{align}
\hat A(z,\lambda) &\;=\;
\renewcommand{\arraystretch}{1.2}
\frac{1}{s(\lambda)}
\left[
  \begin{array}{cccccc}
    -3c(\lambda) & 0 & 1+z_2 & 0 & 1 & 0 \\
    0 & -3c(\lambda) & 0 & 1+z_2 & 0 & 1 \\
    1+z_2^{-1} & 0 & -2c(\lambda)-2s'(\lambda) & 0 & 1+z_1 & 0 \\
    0 & 1+z_2^{-1} & 0 & -2c(\lambda)-2s'(\lambda) & 0 & 1+z_1 \\
    1 & 0 & 1+z_1^{-1} & 0 & -3s'(\lambda) & 0 \\
    0 & 1 & 0 & 1+z_1^{-1} & 0 & -3s'(\lambda)
  \end{array}
\right] \\
 &\;+\; 
\left[
  \begin{array}{cccclr}
  -c_1(\lambda)s_1(\lambda)^{-1} & s_1(\lambda)^{-1} &0&0&0&0 \\
  s_1(\lambda)^{-1} & -s_1'(\lambda)s_1(\lambda)^{-1} &0&0&0&0 \\
  0&0& -c_2(\lambda)s_2(\lambda)^{-1} & s_2(\lambda)^{-1} &0&0 \\
  0&0& s_2(\lambda)^{-1} & -s_2'(\lambda)s_2(\lambda)^{-1} &0&0 \\
  0&0&0&0&0&0 \\
  0&0&0&0&0&0
  \end{array} 
\right].
\end{align}

\begin{figure}[h]
\centerline{
\scalebox{0.27}{\includegraphics{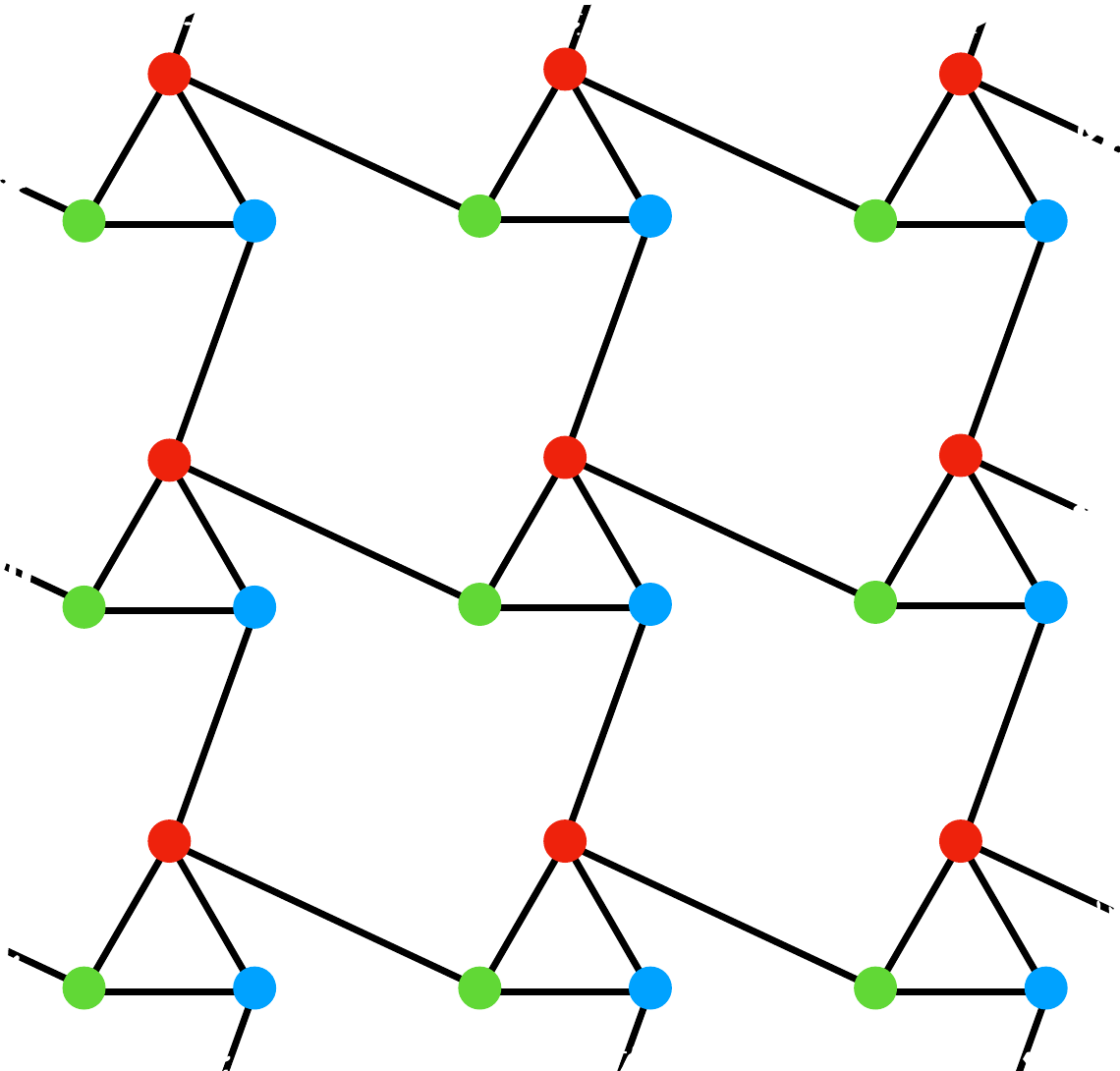}}
}
\caption{\small A tripartite periodic graph.  Two layers of this graph coupled at the green and red vertices with appropriate potentials yields an irreducible Fermi surface for some energies.}
\label{fig:Tripartite}
\end{figure}

We prove, with the help of a computer, that the determinant of this matrix is not factorable as a product of non-monomial Laurent polynomials in $z_1$ and $z_2$.
We use but do not prove here the surjectivity of the Dirichlet-to-Neumann map of the Schr\"odinger operator $-d^2/dx^2 + q(x)$ on an interval:  Given $\lambda\in\mathbb R$ and three real numbers $\alpha$, $\beta$ and $\gamma$, there exists a continuous potential function $q(x)$, such that the corresponding spectral functions of $q(x)$ satisfy $c(\lambda)=\alpha$, $s(\lambda)=\beta$, and $s'(\lambda)=\gamma$. 
By this surjectivity, we can choose the values of $c(\lambda)$, $s(\lambda)$, $s'(\lambda)$, $c_1(\lambda)$, $s_1(\lambda)$, $s_1'(\lambda)$, $c_2(\lambda)$, $s_2(\lambda)$ and $s_2'(\lambda)$ independently, and the determinant becomes a Laurent polynomial in $z_1$ and $z_2$ with real coefficients.
%Then we use the function ``Factor" in Mathematica to determine the factorability, as this function factors polynomials of two or three variables up to degree 100 efficiently (https://reference.wolfram.com/language/guide/PolynomialFactoring.html).
Nonfactorability occurs, for example, for $c=s=s'=s_1=s_1'=c_1=s_2=s_2' = 1$ and $c_2=2$, 
%$c(\lambda) = c_1(\lambda) = c_2(\lambda) = 2$, $s(\lambda) = s_1(\lambda) = s_2(\lambda) = 1$ and $s'(\lambda) = s_1'(\lambda) = s_2'(\lambda) = 1$,
and therefore also for an open set of energies around $\lambda$ for the same potentials.
 
\smallskip
The second example is crossed bi-layer graphene.
The layers are identical, and, within one period, the red vertex of each layer is connected to the green vertex of the other layer.   With $w = 1+z_1^{-1}+z_2^{-1}$ and $w' = 1 + z_1 + z_2$ the spectral matrix for this graph is
\begin{align}
  \hat A(z,\lambda) &\;=\;
\renewcommand{\arraystretch}{1.2}
\frac{1}{s(\lambda)}\left[
\begin{array}{cccc}
  -3c(\lambda) & 0 & w & 0 \\
  0 & -3c(\lambda) & 0 & w \\
  w' & 0 & -3s'(\lambda) & 0 \\
  0 & w' & 0 & -3s'(\lambda) 
\end{array}
\right]  \\
&\;+\;
\renewcommand{\arraystretch}{1.2}
\left[
\begin{array}{cccc}
  -c_1(\lambda)s_1(\lambda)^{-1} &0&0& s_1(\lambda)^{-1} \\
  0& -c_2(\lambda)s_2(\lambda)^{-1} & s_2(\lambda)^{-1} &0 \\
  0& s_2(\lambda)^{-1} & -s'_2(\lambda)s_2(\lambda)^{-1} &0 \\
  s_1(\lambda)^{-1} &0&0& -s'_1(\lambda)s_1(\lambda)^{-1}
\end{array}
\right].
\end{align}
The choice $c=s=s'=s_1=s_1'=c_1=s_2=s_2' = 1$ and $c_2=2$ renders the determinant not factorable.  Incidentally, when the connecting edges have the same potential, the determinant factors, but the factors are not functions of~$ww'$ or any other function of $z_1$ and~$z_2$.

%%%%%%%%%%%%%%%%%%%%%%%%%%%%%%%%%%%%%%%%%%%%%%%%%
%%%%%%%%%%%%%%%%%%%%%%%%%%%%%%%%%%%%%%%%%%%%%%%%%
\section{Appendix:  Moving the poles of the dispersion function}\label{sec:dotted}

This appendix proves Proposition~\ref{prop:poles} in section~\ref{subsec:notation}.  The proof is based on the dotted-graph technique~\cite{KuchmentZhao2019a}.
First consider the simple case of a quantum graph where the underlying graph $E$ consists of two vertices and the edge $e=(v_1,v_2)$ between them, identified with the $x$-interval $[0,L]$.  With the operator $-d^2/dx^2-q(x)$ and Robin parameters $\alpha_1$ at $v_1 (x=0)$ and $\alpha_2$ at $v_2 (x=L)$, we obtain a quantum graph~$(E,Q)$.
Let $\dot E$ be the graph obtained by placing an additional vertex $v$ at the point of $e$ corresponding to $x=\ell\in\;]0,L[$; thus $\dot E$ consists of three vertices and two edges $e_1=(v_1,v)$ and $e_2=(v,v_2)$, with $e_1$ identified with $[0,\ell]$ and $e_2$ identified with $[\ell,L]$.

Restricting the potential $q$ to $e_1$ and $e_2$ and imposing the Neumann condition at $v$ yields a quantum graph $(\dot E,Q)$.  The same symbol $Q$ is used for the operator because the Neumann condition guarantees continuity of value and derivative across $v$; thus $(E,Q)$ and $(\dot E,Q)$ are essentially identical quantum graphs.

Denote the transfer matrices for $-d^2/dx^2 + q(x)$ on $[0,L]$, on $[0,\ell]$, and on $[\ell,L]$ by
\begin{equation}
  T(\lambda)=\mat{1.2}{c(\lambda)}{s(\lambda)}{c'(\lambda)}{s'(\lambda)},
  \quad
  T_1(\lambda)=\mat{1.2}{c_1(\lambda)}{s_1(\lambda)}{c_1'(\lambda)}{s_1'(\lambda)},
  \quad
  T_2(\lambda)=\mat{1.2}{c_2(\lambda)}{s_2(\lambda)}{c_2'(\lambda)}{s_2'(\lambda)}.
\end{equation}
Considering $(\dot E,Q)$ as one period of a $d$-periodic disconnected graph, the dispersion function $D(z,\lambda)$ is a meromorphic function $\dhRR(\lambda)$ of $\lambda$ alone, as its discrete reduction~$\hat Q(z,\lambda)$ is independent of $z$,
\begin{equation}\label{hRR}
\dhRR(\lambda) \;=\; \det
  \renewcommand{\arraystretch}{1.3}
\left[\!
\begin{array}{ccc}
  -\frac{c_1(\lambda)}{s_1(\lambda)}-\alpha_1 & \frac{1}{s_1(\lambda)} & 0\\
  \vspace{-2ex}\\
  \frac{1}{s_1(\lambda)} & -\frac{s_1'(\lambda)}{s_1(\lambda)}-\frac{c_2(\lambda)}{s_2(\lambda)} & \frac{1}{s_2(\lambda)}\\
  \vspace{-2ex}\\
  0 & \frac{1}{s_2(\lambda)} & -\frac{s_2'(\lambda)}{s_2(\lambda)}-\alpha_2
\end{array}
\!\right].
\end{equation}
When the homogeneous Dirichlet condition is imposed at either end (not both ends) of $[0,L]$, one obtains
\begin{equation}
  \dhDR(\lambda) \;=\; \det
\left[\!
\begin{array}{cc}
 -\frac{s_1'(\lambda)}{s_1(\lambda)}-\frac{c_2(\lambda)}{s_2(\lambda)} & \frac{1}{s_2(\lambda)}\\
  \vspace{-2ex}\\
  \frac{1}{s_2(\lambda)} & -\frac{s_2'(\lambda)}{s_2(\lambda)}-\alpha_2
\end{array}
\!\right],
\quad
  \dhRD(\lambda) \;=\; \det
\left[\!
\begin{array}{cc}
  -\frac{c_1(\lambda)}{s_1(\lambda)}-\alpha_1 & \frac{1}{s_1(\lambda)} \\
  \vspace{-2ex}\\
  \frac{1}{s_1(\lambda)} & -\frac{s_1'(\lambda)}{s_1(\lambda)}-\frac{c_2(\lambda)}{s_2(\lambda)}\end{array}
\!\right],
\end{equation}
and\, $\dhDD(\lambda) = -\frac{s_1'(\lambda)}{s_1(\lambda)}-\frac{c_2(\lambda)}{s_2(\lambda)}$ when the Dirichlet condition is imposed at both ends.
Using the relation $T = T_2 T_1$, one computes that
\begin{equation}
  \dhRR = -\frac{c'+\alpha_1 s' + \alpha_2 c + \alpha_1\alpha_2 s}{s_1s_2},
  \quad
  \dhDR = \frac{s'+\alpha_2 s}{s_1s_2},
  \quad
  \dhRD = \frac{c + \alpha_1 s}{s_1s_2},
  \quad
  \dhDD = -\frac{s}{s_1s_2}.
\end{equation}
For the un-dotted quantum graph $(E,Q)$, one obtains these same expressions except with the denominator $s_1(\lambda)s_2(\lambda)$ replaced by~$s(\lambda)$,
\begin{equation}
  \hRR = -\frac{c'+\alpha_1 s' + \alpha_2 c + \alpha_1\alpha_2 s}{s},
  \quad
  \hDR = \frac{s'+\alpha_2 s}{s},
  \quad
  \hRD = \frac{c + \alpha_1 s}{s},
  \quad
  \hDD = -\frac{s}{s}.
\end{equation}
Observe that, given $\lambda_0$, one can guarantee that $s_1(\lambda_0)s_2(\lambda_0)\not=0$ by choosing the point $\ell$ not to be a root of any Dirichlet eigenfunction of $-d^2/dx^2+q(x)$ for $\lambda_0$ on $[0,L]$.

These calculations show that the numerators in the expressions above contain the essential spectral information.  In fact this is true of periodic quantum graphs in general.  To go from the dispersion function for a quantum graph $(\Gamma,A)$ to the dispersion function for a dotted version $(\dot\Gamma,A)$, one simply multiplies by a factor of the form $s(\lambda)/(s_1(\lambda)s_2(\lambda))$ for each dotted edge.

\begin{proof}[Proof of Proposition~\ref{prop:poles}]
If $v_1$ and $v_2$ are not in the same $\ZZ^d$ orbit, we can assume that they both are in the vertex set $\V_0$ of the fundamental domain chosen for constructing $\hat A(z,\lambda)$, since $D(z,\lambda)$ is independent of that choice.
Denote by $\hat A(z,\lambda)$ and $\hat{\dot A}(z,\lambda)$ the discrete reductions at energy $\lambda$ of the quantum graphs $(\Gamma,A)$ and~$(\dot\Gamma,A)$.  Index the rows and columns of $\hat A(z,\lambda)$ so that the first two correspond to $v_1$ and $v_2$; then augment it with a $0^\text{th}$ column and row consisting of a $1$ in the $(0,0)$ entry and zeroes elsewhere.  Call this matrix $\hat{\tilde A}(z,\lambda)$.

The matrix $\hat{\tilde A}(z,\lambda)$ has the block form
\begin{equation}
  \renewcommand{\arraystretch}{1.3}
\left[
\begin{array}{c|c}
  \Sigma + A & B \\\hline
                 C & D
\end{array}
\right],
\end{equation}
in which
\begin{equation}
  \Sigma =
  \renewcommand{\arraystretch}{1.1}
\left[
\begin{array}{ccc}
  1 & 0 & 0 \\
  0 & -cs^{-1} & s^{-1} \\
  0 & s^{-1} & -s's^{-1}
\end{array}
\right],
\end{equation}
$A$ and $B$ have all zeroes in the first row, and $A$ and $C$ have all zeroes in the first column.  The variable $z$ does not appear in~$\Sigma$ because $v$ and $w$ are in the same 
The matrix $\hat{\dot A}(z,\lambda)$ is obtained by replacing $\Sigma$ by $\dot\Sigma$, and $\dot\Sigma$ is obtained from $\hRR(\lambda)$ (eq.\,\ref{hRR}) with $\alpha_1\!=\!\alpha_2\!=\!0$, by switching the first two rows and the first two columns (that is, switching the order of the vertices from $(v_1,v,v_2)$ to $(v,v_1,v_2)$), to obtain
\begin{equation}
  \dot\Sigma =
  \renewcommand{\arraystretch}{1.1}
\left[
\begin{array}{ccc}
  -s\,s_1^{-1}s_2^{-1} & s_1^{-1} & s_2^{-1} \\
  s_1^{-1} & -c_1s_1^{-1} & 0 \\
  s_2^{-1} & 0 & -s_2's_2^{-1}
\end{array}
\right],
\end{equation}
where the relation $s=s_1c_2+s_1's_2$ is used in the upper left entry.

The $3\!\times\!3$ matrix $K = A-BD^{-1}C$ has all zeroes in its first row and first column.  A computation using the relation $T=T_2T_1$ yields the key relation
\begin{equation}\label{ssdsd}
  s_1(\lambda)s_2(\lambda)\,\det(\dot\Sigma+K) \;=\; s(\lambda)\, \det(\Sigma+K),
\end{equation}
which holds for any matrix $K$ whose first column and row vanishes.
Using this together with
\begin{equation}
  \det \hat{\dot A} = \det D\,\det(\dot\Sigma + K),
  \qquad
 \det \hat A = \det \hat{\tilde A} = \det D \det(\Sigma+K)
\end{equation}
yields the statement of the theorem.

If $v_2=gv_1$ for some $g\in\ZZ^d$, the process above remains the same, except that
\begin{equation}
  \Sigma \,=\,
  \renewcommand{\arraystretch}{1.1}
\frac{1}{s}
\left[
\begin{array}{cc}
  1 & 0 \\
  0 & -c-s'+z^g+z^{-g}
\end{array}
\right],
\qquad
  \dot\Sigma \,=\,
  \renewcommand{\arraystretch}{1.1}
\frac{1}{s_1s_2}
\left[
\begin{array}{cc}
  -s & s_2+z^gs_1 \\
  s_2+z^{-g}s_1 & -c_1s_2-s_2's_1
\end{array}
\right],
\end{equation}
and $K$ is a $2\!\times\!2$ matrix with its only nonzero entry being the lower right.  In this case, one obtains (\ref{ssdsd}) with an extra minus sign on one side.
\end{proof}

%%%%%%%%%%%%%%%%%%%%%%%%%%%%%%%%%%%%%%%%%%%%%%%%%

\bigskip
\noindent{\bfseries Acknowledgement.}
This material is based upon work supported by the National Science Foundation under Grant No. DMS-1814902.

\end{document}